\providecommand{\U}[1]{\protect\rule{.1in}{.1in}}
\newtheorem{theorem}{Theorem}
\newtheorem{conjecture}[theorem]{Conjecture}
\newtheorem{corollary}[theorem]{Corollary}
\newtheorem{definition}[theorem]{Definition}
\newtheorem{lemma}[theorem]{Lemma}
\newtheorem{problem}[theorem]{Problem}
\newtheorem{proposition}[theorem]{Proposition}
\newenvironment{proof}[1][Proof]{\noindent\textbf{#1.} }{\ \rule{0.5em}{0.5em}}
\begin{document}

\title{The Equivalence of Sampling and Searching}
\author{Scott Aaronson\thanks{MIT. \ Email: aaronson@csail.mit.edu. \ \ This material
is based upon work supported by the National Science Foundation under Grant
No. 0844626. \ Also supported by a DARPA YFA grant and the Keck Foundation.}}
\date{}
\maketitle

\begin{abstract}
In a \textit{sampling problem}, we are given an input $x\in\left\{
0,1\right\}  ^{n}$, and asked to sample approximately from a probability
distribution $\mathcal{D}_{x}$ over $\operatorname*{poly}\left(  n\right)
$-bit strings. \ In a \textit{search problem}, we are given an input
$x\in\left\{  0,1\right\}  ^{n}$, and asked to find a member of a nonempty set
$A_{x}$ with high probability. \ (An example is finding a Nash equilibrium.)
\ In this paper, we use tools from Kolmogorov complexity and algorithmic
information theory to show that sampling and search problems are essentially
equivalent. \ More precisely, for any sampling problem $S$, there exists a
search problem $R_{S}$\ such that, if $\mathcal{C}$ is any \textquotedblleft
reasonable\textquotedblright\ complexity class, then $R_{S}$\ is in the search
version of $\mathcal{C}$\ if and only if $S$\ is in the sampling version.

As one application, we show that $\mathsf{SampP}=\mathsf{SampBQP}$ if and only
if $\mathsf{FBPP}=\mathsf{FBQP}$: in other words, classical computers can
efficiently sample the output distribution of every quantum circuit, if and
only if they can efficiently solve every search problem that quantum computers
can solve. \ A second application is that, assuming a plausible conjecture,
there exists a search problem $R$\ that can be solved using a simple
linear-optics experiment, but that cannot be solved efficiently by a classical
computer unless the polynomial hierarchy collapses. \ That application will be
described in a forthcoming paper with Alex Arkhipov on the computational
complexity of linear optics.

\end{abstract}

\section{Introduction\label{INTRO}}

The \textit{Extended Church-Turing Thesis (ECT)} says that all computational
problems that are feasibly solvable in the physical world are feasibly
solvable by a probabilistic Turing machine. \ By now, there have been almost
two decades of discussion about this thesis, and the challenge that quantum
computing poses to it. \ This paper is about a related question that has
attracted surprisingly little interest: namely, what exactly should we
understand the ECT to \textit{state}? \ When we say \textquotedblleft all
computational problems,\textquotedblright\ do we mean decision problems?
promise problems? search problems? sampling problems? possibly other types of
problems?\ \ Could the ECT hold for some of these types of problems but fail
for others?

Our main result is an \textit{equivalence} between sampling and search
problems: the ECT holds for one type of problem if and only if it holds for
the other. \ As a motivating example, we will prove the surprising fact that,
if classical computers can efficiently solve any search problem that quantum
computers can solve, then they can \textit{also} approximately sample the
output distribution of any quantum circuit. \ The proof makes essential use of
Kolmogorov complexity. \ The technical tools that\ we will use are standard
ones in the algorithmic information theory literature; our contribution is
simply to apply those tools to obtain a useful equivalence principle in
complexity theory that seems not to have been known before.

While the \textit{motivation} for our equivalence theorem came from quantum
computing, we wish to stress that the theorem itself is much more general, and
has nothing to do with quantum computing in particular. \ Throughout this
paper, we will use the \textit{names} of quantum complexity classes---such as
$\mathsf{BQP}$\ (Bounded-Error Quantum Polynomial-Time), the class of
languages efficiently decidable by a quantum algorithm---but only as
\textquotedblleft black boxes.\textquotedblright\ \ No familiarity whatsoever
with quantum computing is needed.

The rest of the paper is organized as follows. \ Section \ref{BACKGROUND}
contains a general discussion of the relationships among decision problems,
promise problems, search problems, and sampling problems; it can be safely
skipped by readers already familiar with this material. \ Section
\ref{RESULTS} states our main result, as well as its implications for quantum
computing in general and linear-optics experiments in particular. \ Section
\ref{TECHNIQUES}\ explains how Kolmogorov complexity is used to prove the main
result, and situates the result in the context of earlier work on Kolmogorov
complexity. \ Next, in Section \ref{PRELIM}, we review some needed definitions
and results from complexity theory (in Section \ref{COMPLEXITY}), algorithmic
information theory (in Section \ref{KOLMOG}), and \textquotedblleft
standard\textquotedblright\ information theory (in Section \ref{INFOTHY}).
\ We prove the main result in Section \ref{MAIN}, and the example application
to quantum computing in Section \ref{IMPQC}. \ Finally, in Section \ref{OPEN},
we present several extensions and generalizations of the main result, which
address various shortcomings of it. \ Section \ref{OPEN} also discusses
numerous open problems.

\subsection{Background\label{BACKGROUND}}

Theoretical computer science has traditionally focused on \textit{language
decision problems}, where given a language $L\subseteq\left\{  0,1\right\}
^{\ast}$, the goal is to decide whether $x\in L$\ for any input $x$. \ From
this perspective, asking whether quantum computing contradicts the ECT is
tantamount to asking:

\begin{problem}
\label{bqpprob}Does $\mathsf{BPP}=\mathsf{BQP}$?
\end{problem}

However, one can also consider \textit{promise problems},\ where the goal is
to accept all inputs in a set $L_{\operatorname*{YES}}\subseteq\left\{
0,1\right\}  ^{\ast}$\ and reject all inputs in another set
$L_{\operatorname*{NO}}\subseteq\left\{  0,1\right\}  ^{\ast}$. \ Here
$L_{\operatorname*{YES}}$\ and $L_{\operatorname*{NO}}$\ are disjoint, but
their union is not necessarily all strings, and an algorithm can do whatever
it likes on inputs not in $L_{\operatorname*{YES}}\cup L_{\operatorname*{NO}}%
$. \ Goldreich \cite{goldreich:promise} has made a strong case that promise
problems are at least as fundamental as language decision problems, if not
more so. \ To give one relevant example, the task

\begin{quotation}
\textit{Given a quantum circuit }$C$\textit{, estimate the probability
}$p\left(  C\right)  $\textit{\ that }$C$\textit{\ accepts}
\end{quotation}

\noindent is easy to formulate as a promise problem, but has no known
formulation as a language decision problem. \ The reason is the usual
\textquotedblleft knife-edge\textquotedblright\ issue: given any probability
$p^{\ast}\in\left[  0,1\right]  $\ and error bound $\varepsilon\geq
1/\operatorname*{poly}\left(  n\right)  $, we can ask a simulation algorithm
to accept all quantum circuits $C$ such that $p\left(  C\right)  \geq p^{\ast
}+\varepsilon$, and to reject all circuits $C$\ such that $p\left(  C\right)
\leq p^{\ast}-\varepsilon$. \ But we cannot reasonably ask an algorithm to
decide whether $p\left(  C\right)  =p^{\ast}+2^{-n}$\ or $p\left(  C\right)
=p^{\ast}-2^{-n}$: if $p\left(  C\right)  $ is too close to $p^{\ast}$, then
the algorithm's behavior is unknown.

Let $\mathsf{P{}romiseBPP}$\ and $\mathsf{P{}romiseBQP}$\ be the classes of
promise problems solvable by probabilistic and quantum computers respectively,
in polynomial time and with bounded probability of error. \ Then a second way
to ask whether quantum mechanics contradicts the ECT is to ask:

\begin{problem}
\label{promisebqpprob}Does $\mathsf{P{}romiseBPP}=\mathsf{P{}romiseBQP}$?
\end{problem}

Now, if one accepts replacing languages by promise problems, then there seems
little reason not to go further. \ One can also consider \textit{search
problems}, where given an input $x\in\left\{  0,1\right\}  ^{n}$, the goal is
to output any element of some nonempty \textquotedblleft solution
set\textquotedblright\ $A_{x}\subseteq\left\{  0,1\right\}
^{\operatorname*{poly}\left(  n\right)  }$. \ (Search problems are also called
\textquotedblleft relational problems,\textquotedblright\ for the historical
reason that one can define such a problem using a binary relation
$R\subseteq\left\{  0,1\right\}  ^{\ast}\times\left\{  0,1\right\}  ^{\ast}$,
with $\left(  x,y\right)  \in R$\ if and only if $y\in A_{x}$. \ Another name
often used is \textquotedblleft function problems.\textquotedblright\ \ But
that is inaccurate,\ since the desired output is \textit{not} a function of
the input, except in the special case $\left\vert A_{x}\right\vert =1$. \ We
find \textquotedblleft search problems\textquotedblright\ to be the clearest
name, and will use it throughout in the hope that it sticks. \ The one
important point to remember is that a search problem need \textit{not} be an
$\mathsf{NP}$\ search problem: that is, solutions need not be efficiently verifiable.)

Perhaps the most famous example of a search problem is \textit{finding a Nash
equilibrium}, which Daskalakis et al. \cite{dgp}\ showed to be complete for
the class $\mathsf{PPAD}$. \ By Nash's Theorem, every game has at least one
Nash equilibrium, but the problem of finding one has no known formulation as
either a language decision problem or a promise problem.

Let $\mathsf{FBPP}$\ and $\mathsf{FBQP}$\ be the classes of search problems
solvable by probabilistic and quantum computers respectively, with success
probability $1-\delta$, in time polynomial in $n$ and $1/\delta$.\footnote{The
$\mathsf{F}$\ in $\mathsf{FBPP}$\ and $\mathsf{FBQP}$\ stands for
\textquotedblleft function problem.\textquotedblright\ \ Here we are following
the standard naming convention, even though the term \textquotedblleft
function problem\textquotedblright\ is misleading for the reason pointed out
earlier.} \ Then a third version of the \textquotedblleft ECT
question\textquotedblright\ is:

\begin{problem}
\label{fbqpprob}Does $\mathsf{FBPP}=\mathsf{FBQP}$?
\end{problem}

There is yet another important type of problem in theoretical computer
science. \ These are \textit{sampling problems}, where given an input
$x\in\left\{  0,1\right\}  ^{n}$, the goal is to sample (exactly or, more
often, approximately) from some probability distribution $\mathcal{D}_{x}%
$\ over $\operatorname*{poly}\left(  n\right)  $-bit strings. \ Well-known
examples of sampling problems include sampling a random point in a
high-dimensional convex body and sampling a random matching in a bipartite graph.

Let $\mathsf{SampP}$\ and $\mathsf{SampBQP}$\ be the classes of sampling
problems that are solvable by probabilistic and quantum computers
respectively, to within $\varepsilon$\ error in total variation distance, in
time polynomial in $n$\ and $1/\varepsilon$.\footnote{Note that we write
$\mathsf{SampP}$\ instead of \textquotedblleft$\mathsf{SampBPP}$%
\textquotedblright\ because there is no chance of confusion here. \ Unlike
with decision, promise, and relation problems, with sampling problems it does
not even make sense to talk about deterministic algorithms.} \ Then a fourth
version of our question is:

\begin{problem}
\label{sampbqpprob}Does $\mathsf{SampP}=\mathsf{SampBQP}$?
\end{problem}

Not surprisingly, \textit{all} of the above questions are open. But we can ask
an obvious meta-question:

\begin{quotation}
\noindent\textit{What is the relationship among Problems \ref{bqpprob}%
-\ref{sampbqpprob}? \ If the ECT fails in one sense, must it fail in the other
senses as well?}
\end{quotation}

\noindent In one direction, there are some easy implications:%
\begin{align*}
\mathsf{SampP}=\mathsf{SampBQP}  &  \Longrightarrow\mathsf{FBPP}%
=\mathsf{FBQP}\\
&  \Longrightarrow\mathsf{P{}romiseBPP}=\mathsf{P{}romiseBQP}\\
&  \Longrightarrow\mathsf{BPP}=\mathsf{BQP}.
\end{align*}
For the first implication, if every quantumly samplable distribution were also
classically samplable, then given a quantum algorithm $Q$\ solving a search
problem $R$, we could approximate $Q$'s\ output distribution using a classical
computer, and thereby solve $R$\ classically as well. \ For the second and
third implications, every promise problem is also a search problem (with
solution set $A_{x}\subseteq\left\{  0,1\right\}  $), and every language
decision problem is also a promise problem (with the empty promise).

So the interesting part concerns the possible implications in the
\textquotedblleft other\textquotedblright\ direction. \ For example, could it
be the case that $\mathsf{BPP}=\mathsf{BQP}$, yet $\mathsf{P{}romiseBPP}%
\neq\mathsf{P{}romiseBQP}$? \ Not only is this a formal possibility, but it
does not even seem absurd, when we consider that

\begin{enumerate}
\item[(1)] the existing candidates for languages in $\mathsf{BQP}%
\setminus\mathsf{BPP}$ (for example, decision versions of the factoring and
discrete log problems \cite{shor}) are all extremely \textquotedblleft
special\textquotedblright\ in nature, but

\item[(2)] $\mathsf{P{}romiseBQP}$ contains the \textquotedblleft
general\textquotedblright\ problem of estimating the acceptance probability of
an arbitrary quantum circuit.
\end{enumerate}

A second example of a difficult and unsolved meta-question is whether
$\mathsf{P{}romiseBPP}=\mathsf{P{}romiseBQP}$ implies $\mathsf{SampP}%
=\mathsf{SampBQP}$. \ Translated into \textquotedblleft physics
language,\textquotedblright\ the question is this: suppose we had an efficient
classical algorithm to estimate the \textit{expectation value} of any
observable\ in quantum mechanics. \ Would that imply an efficient classical
algorithm to \textit{simulate any quantum experiment},\ in the sense of
sampling from a probability distribution close to the one quantum mechanics
predicts? \ The difficulty is that, if we consider a quantum system of $n$
particles, then a measurement could in general have $c^{n}$\ possible
outcomes, each with probability on the order of $c^{-n}$. \ So, even supposing
we could estimate any \textit{given} probability to within $\pm\varepsilon$,
in time polynomial in $n$\ and $1/\varepsilon$, that would seem to be of
little help for the sampling task.

\subsection{Our Results\label{RESULTS}}

This paper shows that \textit{two} of the four types of problem discussed
above---namely, sampling problems and search problems---are essentially
equivalent. \ More precisely, given any sampling problem $S$, we will
construct a search problem $R=R_{S}$\ such that, if $\mathcal{C}$ is any
\textquotedblleft reasonable\textquotedblright\ model of computation, then $S$
is in $\mathsf{Samp}\mathcal{C}$\ (the sampling version of $\mathcal{C}$) if
and only if $R$ is in $\mathsf{F}\mathcal{C}$\ (the search version of
$\mathcal{C}$). \ Here is a more formal statement of the result:

\begin{theorem}
[Sampling/Searching Equivalence Theorem]\label{samprel}Let $S$\ be any
sampling problem. \ Then there exists a search problem $R_{S}$\ such that

\begin{enumerate}
\item[(i)] If $\mathcal{O}$\ is any oracle for $S$, then $R_{S}\in
\mathsf{FBPP}^{\mathcal{O}}$.

\item[(ii)] If $B$ is any probabilistic Turing machine solving $R_{S}$, then
$S\in\mathsf{SampP}^{B}$.
\end{enumerate}
\end{theorem}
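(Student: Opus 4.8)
The plan is to let the search problem $R_{S}$ encode the task of producing a \emph{Kolmogorov-random} batch of draws from $\mathcal{D}_{x}$. Concretely, on input $\left\langle x,0^{m},0^{k}\right\rangle $ I would define the solution set to consist of all tuples $Y=\left(  y_{1},\dots,y_{m}\right)  $ with
\[
K\left(  Y\right)  \geq\log_{2}\frac{1}{\mathcal{D}_{x}^{\otimes m}\left(Y\right)  }-k,
\]
where $\mathcal{D}_{x}^{\otimes m}$ denotes $m$ independent draws from $\mathcal{D}_{x}$ and $K$ is prefix Kolmogorov complexity; the batch size $m$ and slack $k$ are supplied in unary so the reductions can dial up accuracy. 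This set is a well-defined mathematical object even though $\mathcal{D}_{x}$ need not be computable (search problems need not be verifiable), and it is nonempty because, as the next step shows, it carries almost all of the mass of $\mathcal{D}_{x}^{\otimes m}$.

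For part (i), I would argue that an oracle for $S$ solves $R_{S}$ just by sampling. The elementary counting bound is that for \emph{any} distribution $\mu$ we have $\Pr_{Y\sim\mu}\!\left[  K(Y)<\log_{2}(1/\mu(Y))-k\right]  <2^{-k}$, since such $Y$ satisfy $\mu(Y)<2^{-k}2^{-K(Y)}$ and $\sum_{Y}2^{-K(Y)}\leq1$ by Kraft's inequality. Applying this with $\mu=\mathcal{D}_{x}^{\otimes m}$, a genuine batch lands in the solution set with probability $\geq1-2^{-k}$. The oracle only samples within a tiny total-variation error per draw, but choosing that per-draw error to be $O(\delta/m)$ keeps the batch within $\delta/2$ of $\mathcal{D}_{x}^{\otimes m}$, so the $\mathsf{FBPP}^{\mathcal{O}}$ machine that draws and concatenates $m$ samples succeeds with probability $1-\delta$ in $\operatorname*{poly}$ time in the input length and $1/\delta$.

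The substance is in part (ii), and this is where I expect the real work. Let $B$ solve $R_{S}$ and let $\mathcal{E}$ be the output distribution of $B$ on $\left\langle x,0^{m},0^{k}\right\rangle $; then $\mathcal{E}$ is (lower semi)computable with $K(\mathcal{E})\leq c$ for some $c=O(n+\log m)$, since it suffices to hard-wire $B$, $x$, $m$, $k$. By the Coding Theorem, $K(Y)\leq\log_{2}(1/\mathcal{E}(Y))+c+O(1)$ for every $Y$. Intersecting this with the search guarantee $K(Y)\geq\log_{2}(1/\mathcal{D}_{x}^{\otimes m}(Y))-k$, which holds on the good event of probability $\geq1-\delta$, gives the pointwise bound $\mathcal{E}(Y)\leq2^{k+c+O(1)}\,\mathcal{D}_{x}^{\otimes m}(Y)$ on solutions. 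After conditioning $\mathcal{E}$ on the good event (perturbing it by $O(\delta)$ in total variation and ruling out outputs with $\mathcal{D}_{x}^{\otimes m}(Y)=0$), this reads as a relative-entropy bound $D\!\left(  \mathcal{E}\,\|\,\mathcal{D}_{x}^{\otimes m}\right)  \leq k+c+O(1)$.

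The key idea is then to amortize this \emph{additive} slack over the $m$ coordinates. Because the reference measure is a product, relative entropy is superadditive over marginals, $D\!\left(  \mathcal{E}\,\|\,\mathcal{D}_{x}^{\otimes m}\right)  \geq\sum_{i=1}^{m}D\!\left(  \mathcal{E}_{i}\,\|\,\mathcal{D}_{x}\right)  $, so the averaged single-coordinate marginal $\mathcal{F}=\frac{1}{m}\sum_{i}\mathcal{E}_{i}$ obeys $D\!\left(\mathcal{F}\,\|\,\mathcal{D}_{x}\right)  \leq\frac{1}{m}\sum_{i}D(\mathcal{E}_{i}\|\mathcal{D}_{x})\leq(k+c+O(1))/m$ by convexity, and Pinsker's inequality yields $\left\Vert \mathcal{F}-\mathcal{D}_{x}\right\Vert _{\mathrm{TV}}\leq\sqrt{(k+c+O(1))/(2m)}$. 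Thus the $\mathsf{SampP}^{B}$ algorithm that runs $B$ on $\left\langle x,0^{m},0^{k}\right\rangle $ and outputs a uniformly random coordinate of the resulting tuple samples within $\varepsilon$ of $\mathcal{D}_{x}$ once $m=\Theta((n+\log m)/\varepsilon^{2})=\operatorname*{poly}\left(  n,1/\varepsilon\right)  $. The main obstacle, and the crux of the whole argument, is precisely this passage from ``every output is incompressible'' to ``the output distribution is close to $\mathcal{D}_{x}$'': a single sample cannot force closeness, since an $O(k+c)$ additive gap in log-probability is hopeless for total variation, and it is only the tensorization over the batch that converts the unavoidable $O(n)$-bit Kolmogorov overhead into a per-coordinate error $O(n/m)$ that can be driven below any $\varepsilon$.
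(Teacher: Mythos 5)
Your proposal is correct and follows essentially the same route as the paper: define the solution set via a Kolmogorov-incompressibility condition on a batch of samples, use the counting bound for part (i), and for part (ii) combine the coding theorem with superadditivity of KL-divergence over marginals against a product reference measure, then Pinsker. The only (harmless) deviations are your use of unconditional $K(Y)$ in place of $K(Y\,|\,x,\delta)$---which you correctly compensate for by letting the batch size grow with $n$---and finishing with convexity of KL plus one application of Pinsker where the paper applies Pinsker per coordinate and then Cauchy--Schwarz.
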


As one application, we show that the \textquotedblleft
obvious\textquotedblright\ implication $\mathsf{SampP}=\mathsf{SampBQP}%
\Longrightarrow\mathsf{FBPP}=\mathsf{FBQP}$ can be reversed:

\begin{theorem}
\label{fbqpthm}$\mathsf{FBPP}=\mathsf{FBQP}$ if and only if $\mathsf{SampP}%
=\mathsf{SampBQP}$. \ In other words, classical computers can efficiently
solve every $\mathsf{FBQP}$ search problem, if and only if they can
approximately sample the output distribution of every quantum circuit.
\end{theorem}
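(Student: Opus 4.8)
The plan is to derive Theorem~\ref{fbqpthm} as a corollary of the Sampling/Searching Equivalence Theorem (Theorem~\ref{samprel}), applied twice---once in each direction. The easy direction, $\mathsf{SampP}=\mathsf{SampBQP}\Longrightarrow\mathsf{FBPP}=\mathsf{FBQP}$, is already sketched in the excerpt: given a quantum algorithm $Q$ solving a search problem $R\in\mathsf{FBQP}$, its output distribution is a quantumly-samplable distribution, hence (by hypothesis) classically samplable to within any $\varepsilon$, and sampling from a distribution $\varepsilon$-close to $Q$'s output lets us recover a valid solution with high probability. So the real content is the reverse implication, $\mathsf{FBPP}=\mathsf{FBQP}\Longrightarrow\mathsf{SampP}=\mathsf{SampBQP}$, and this is where I would invoke Theorem~\ref{samprel}.

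First I would take an arbitrary sampling problem $S\in\mathsf{SampBQP}$; the goal is to show $S\in\mathsf{SampP}$. \ By Theorem~\ref{samprel}, there is an associated search problem $R_S$ with the two stated properties. \ The strategy is to show (a) that $R_S\in\mathsf{FBQP}$, (b) apply the hypothesis $\mathsf{FBPP}=\mathsf{FBQP}$ to conclude $R_S\in\mathsf{FBPP}$, and then (c) use property (ii) of Theorem~\ref{samprel} to conclude $S\in\mathsf{SampP}$. \ For step (a): since $S\in\mathsf{SampBQP}$, there is a quantum algorithm that samples $S$, which serves as a quantum oracle $\mathcal{O}$ for $S$; property (i) then gives $R_S\in\mathsf{FBPP}^{\mathcal{O}}$, and since the oracle is itself realizable by a polynomial-time quantum machine, the whole computation can be folded into a single $\mathsf{FBQP}$ algorithm, so $R_S\in\mathsf{FBQP}$. \ For step (c): once $\mathsf{FBPP}=\mathsf{FBQP}$ yields a \emph{classical} probabilistic machine $B$ solving $R_S$, property (ii) applied to $B$ gives $S\in\mathsf{SampP}^{B}$; and because $B$ is itself a polynomial-time classical machine, $\mathsf{SampP}^{B}=\mathsf{SampP}$, so $S\in\mathsf{SampP}$ as desired.

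The step I expect to require the most care is the ``folding'' of the oracle in step (a), and symmetrically the collapse of $\mathsf{SampP}^{B}$ to $\mathsf{SampP}$ in step (c). \ Here one must check that the relativized classes behave as expected when the oracle is instantiated by a genuine poly-time machine of the appropriate type: that a classical reduction $\mathsf{FBPP}^{\mathcal{O}}$ calling a quantum sampler $\mathcal{O}$ composes into an honest $\mathsf{FBQP}$ algorithm (accounting for the approximate nature of the sampling and the $1/\varepsilon$ and $1/\delta$ running-time dependencies), and that $\mathsf{SampP}$ is closed under poly-time classical oracle calls to another $\mathsf{SampP}$ machine. \ These are essentially composition lemmas for approximate sampling and search, and the only subtlety is bookkeeping the error parameters so that the final approximation and success probabilities stay within the required bounds. \ Since the ECT-style equivalences in the excerpt are phrased for ``reasonable'' models closed under such composition, I would state the needed closure properties explicitly (or cite them as immediate from the definitions of the classes) and then the two-sided argument closes.
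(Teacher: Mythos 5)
Your proposal is correct and follows essentially the same route as the paper: the easy direction is handled by classically sampling the quantum searcher's output distribution, and the hard direction is exactly the chain $S\in\mathsf{SampBQP}\Rightarrow R_{S}\in\mathsf{FBQP}\Rightarrow R_{S}\in\mathsf{FBPP}\Rightarrow S\in\mathsf{SampP}$ via the two halves of Theorem~\ref{samprel}. The composition/closure issues you flag are exactly the ones the paper silently absorbs into its remark that the argument works for any model with ``the extremely simple closure properties used in the proof.''
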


As a second application (which was actually the original motivation for this
work), we are able to extend a recent result of Aaronson and Arkhipov
\cite{aark}. \ These authors give a sampling problem that is solvable using a
simple linear-optics experiment (so in particular, in $\mathsf{SampBQP}$), but
is \textit{not} solvable efficiently by a classical computer, unless the
permanent of a Gaussian random matrix can be approximated in $\mathsf{BPP}%
^{\mathsf{NP}}$. \ More formally, consider the following problem, called
$\left\vert \text{\textsc{GPE}}\right\vert ^{2}$\ (the \textsc{GPE}\ stands
for Gaussian Permanent Estimation):

\begin{problem}
[$\left\vert \text{\textsc{GPE}}\right\vert ^{2}$]\label{gpe2}Given an input
of the form $\left\langle X,0^{1/\varepsilon},0^{1/\delta}\right\rangle $,
where $X\in\mathbb{C}^{n\times n}$ is an $n\times n$ matrix of independent
$\mathcal{N}\left(  0,1\right)  $\ Gaussians, output a real number $y$\ such
that%
\[
\left\vert y-\left\vert \operatorname*{Per}\left(  X\right)  \right\vert
^{2}\right\vert \leq\varepsilon\cdot n!,
\]
with probability at least $1-\delta$ over both $X\sim\mathcal{N}\left(
0,1\right)  _{\mathbb{C}}^{n\times n}$\ and any internal randomness used by
the algorithm.
\end{problem}

Here $0^{1/\varepsilon}$\ and $0^{1/\delta}$\ represent the numbers
$1/\varepsilon$\ and $1/\delta$\ respectively encoded in unary; such unary
encoding is a standard trick for forcing an algorithm's running time to be
polynomial in $1/\varepsilon$\ and $1/\delta$\ as well as $n$.

The main result of \cite{aark}\ is the following:

\begin{theorem}
[Aaronson and Arkhipov \cite{aark}]\label{aathm}$\mathsf{SampP}%
=\mathsf{SampBQP}$ implies $\left\vert \text{\textsc{GPE}}\right\vert ^{2}%
\in\mathsf{FBPP}^{\mathsf{NP}}$.
\end{theorem}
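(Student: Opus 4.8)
The plan is to realize $\left\vert \text{\textsc{GPE}}\right\vert^{2}$ as an estimation task about a specific $\mathsf{SampBQP}$ distribution, and then to extract that estimate using Stockmeyer's approximate counting together with an $\mathsf{NP}$ oracle. The distribution I would use is the \emph{linear-optics} (BosonSampling) distribution $\mathcal{D}_{U}$: send $n$ identical photons into an $m$-mode interferometer implementing an $m\times m$ unitary $U$, with $m$ a sufficiently large polynomial in $n$, and measure the photon-number configuration at the output. A standard fact about bosonic statistics is that for a \textquotedblleft no-collision\textquotedblright\ outcome $S$ (at most one photon per output mode) the probability is
\[
\Pr_{\mathcal{D}_{U}}[S]=\left\vert\operatorname*{Per}\left(U_{S}\right)\right\vert^{2},
\]
where $U_{S}$ is the $n\times n$ submatrix of $U$ selecting the $n$ occupied output modes. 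Because this experiment is implementable by a polynomial-size quantum circuit, $\mathcal{D}_{U}\in\mathsf{SampBQP}$; hence under the hypothesis $\mathsf{SampP}=\mathsf{SampBQP}$ there is a classical polynomial-time sampler $C$ whose output is within $\varepsilon$ of $\mathcal{D}_{U}$ in total variation distance, for any desired $\varepsilon=1/\operatorname*{poly}\left(n\right)$.

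First I would connect the Gaussian input $X$ of $\left\vert\text{\textsc{GPE}}\right\vert^{2}$ to a submatrix of $U$ by a \emph{hiding} step. By the classical result that a truncation of a Haar-random unitary converges to a matrix of i.i.d.\ complex Gaussians (when $m$ is a large enough polynomial in $n$), an $n\times n$ submatrix of a Haar-random $U$ is, after rescaling by $\sqrt{m}$, close in distribution to $X\sim\mathcal{N}\left(0,1\right)_{\mathbb{C}}^{n\times n}$. Running this backwards, given $X$ I would embed $m^{-1/2}X$ into a \emph{uniformly random} block of an otherwise Haar-random unitary $U$; the resulting $U$ is statistically close to Haar-random, and the no-collision outcome $S^{\ast}$ corresponding to the embedded rows and columns satisfies $\Pr_{\mathcal{D}_{U}}[S^{\ast}]\approx m^{-n}\left\vert\operatorname*{Per}\left(X\right)\right\vert^{2}$. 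Thus estimating a single outcome probability of $\mathcal{D}_{U}$ is the same, up to the known scaling factor $m^{n}$, as estimating $\left\vert\operatorname*{Per}\left(X\right)\right\vert^{2}$.

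Next I would invoke Stockmeyer's theorem: given the classical sampler $C$ as an explicit polynomial-size randomized circuit and any fixed string $S^{\ast}$, one can compute a $\left(1\pm\gamma\right)$-multiplicative approximation to $\Pr[C\text{ outputs }S^{\ast}]$ in $\mathsf{FBPP}^{\mathsf{NP}}$, in time polynomial in the circuit size and $1/\gamma$. Applying this to $S^{\ast}$ and rescaling by $m^{n}$ gives a candidate value for $\left\vert\operatorname*{Per}\left(X\right)\right\vert^{2}$. Since $\mathbb{E}_{X}[\left\vert\operatorname*{Per}\left(X\right)\right\vert^{2}]=n!$, the additive precision $\varepsilon\cdot n!$ demanded by the problem is exactly the natural scale, and the multiplicative Stockmeyer guarantee translates into it.

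The main obstacle---the crux of the whole argument---is that $C$ only approximates $\mathcal{D}_{U}$ in total variation distance, so \emph{a priori} it could concentrate its entire $\varepsilon$ of error onto the single outcome $S^{\ast}$ I care about, corrupting the estimate. This is precisely what the hiding construction defeats: because $X$ was planted at a uniformly random position inside $U$, the target $S^{\ast}$ is itself uniformly random among the $\binom{m}{n}$ no-collision outcomes, so the \emph{expected} error that $C$ places on $S^{\ast}$ is at most $\varepsilon/\binom{m}{n}$, which is small relative to the typical probability $\sim m^{-n}$. A Markov-inequality argument then shows that for all but a $\delta$-fraction of inputs $X$ the sampler's error on $S^{\ast}$ is negligible compared to $\Pr_{\mathcal{D}_{U}}[S^{\ast}]$. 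Combining the Gaussian-approximation error, the Stockmeyer multiplicative error, and this hiding error, and folding the failure probability into the $\delta$ parameter, yields an $\mathsf{FBPP}^{\mathsf{NP}}$ algorithm meeting the required $\left(\varepsilon,\delta\right)$ bounds, which is the claim.
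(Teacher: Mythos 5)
This theorem is not proved in the present paper at all---it is imported as a black box from the cited Aaronson--Arkhipov work \cite{aark}, so there is no in-paper proof to compare against. Your sketch is a correct reconstruction of the argument actually given in \cite{aark}: the boson-sampling distribution with $\Pr[S]=\left\vert\operatorname*{Per}(U_{S})\right\vert^{2}$, the truncation-of-Haar-unitaries ``hiding'' of $X/\sqrt{m}$ in a random submatrix, Stockmeyer approximate counting relative to an $\mathsf{NP}$ oracle applied to the hypothesized classical sampler, and the Markov-inequality step showing that the variation-distance error cannot systematically concentrate on the uniformly random hidden outcome $S^{\ast}$ are exactly the ingredients of that proof.
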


Note that Theorem \ref{aathm}\ relativizes: for all oracles $\mathcal{O}$, if
$\mathsf{SampBQP}\subseteq\mathsf{SampBPP}^{\mathcal{O}}$, then $\left\vert
\text{\textsc{GPE}}\right\vert ^{2}\in\mathsf{FBPP}^{\mathsf{NP}^{\mathcal{O}%
}}$.

The central conjecture made in \cite{aark} is that estimating $\left\vert
\operatorname*{Per}\left(  X\right)  \right\vert ^{2}$ is as hard for a
Gaussian random matrix $X$\ as it is for an arbitrary matrix $X\in
\mathbb{C}^{n\times n}$:

\begin{conjecture}
[\cite{aark}]$\left\vert \text{\textsc{GPE}}\right\vert ^{2}$\ is
$\mathsf{\#P}$-complete.\label{gpeconj}
\end{conjecture}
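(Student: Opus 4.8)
The plan is to prove the two directions separately, with essentially all of the difficulty concentrated in the hardness direction. The upper bound is the easy part: since $\operatorname*{Per}(X)$ is a sum of $n!$ products of matrix entries, $\left\vert\operatorname*{Per}(X)\right\vert^2$ can be extracted (to the required precision) from a $\mathsf{\#P}$ oracle, and asking only for an approximation with bounded failure probability can only make the task easier. So I would dispose of membership quickly and devote the real work to showing that $\left\vert\text{\textsc{GPE}}\right\vert^2$ is $\mathsf{\#P}$-hard.

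For hardness I would begin from Valiant's theorem that the permanent is $\mathsf{\#P}$-complete. On Valiant's hard instances the permanent is a nonnegative integer, so it is recoverable from its square; this gives that \textit{exactly} computing $\left\vert\operatorname*{Per}(X)\right\vert^2$ for an arbitrary worst-case $X\in\mathbb{C}^{n\times n}$ is already $\mathsf{\#P}$-hard. The heart of the argument is then a worst-case-to-average-case reduction: I would show that an oracle for $\left\vert\text{\textsc{GPE}}\right\vert^2$ — which works only on Gaussian inputs, only approximately, and only with probability $1-\delta$ — still lets one recover $\operatorname*{Per}(A)$ for any fixed $A$. The key structural fact is that $\operatorname*{Per}$ is a degree-$n$ polynomial in the entries, so given $A$ I would interpolate along a curve $t\mapsto A(t)$ chosen so that $A(t)$ is distributed (exactly or approximately) as $\mathcal{N}(0,1)_{\mathbb{C}}^{n\times n}$ for generic $t$; querying the oracle at $\operatorname*{poly}(n)$ such points and interpolating a univariate degree-$n$ polynomial would yield $\operatorname*{Per}(A(0))=\operatorname*{Per}(A)$.

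Two complications must be handled in tandem. First, the oracle is noisy: it errs with probability $\delta$ and returns only an $\varepsilon\cdot n!$ additive approximation. To absorb occasional gross errors I would use Reed--Solomon-style self-correction (Berlekamp--Welch decoding) adapted to the continuous setting, and to control the additive slack I would need the interpolation error not to amplify by more than a $\operatorname*{poly}(n)$ factor, which forces the evaluation nodes to be well-conditioned. Crucially, I would also need an \textit{anti-concentration} bound guaranteeing that $\left\vert\operatorname*{Per}(X)\right\vert^2$ is typically of order $n!$, so that an additive error of $\varepsilon\cdot n!$ really is small relative to the signal. Second, the intermediate query points $A(t)$ must stay faithful to the true Gaussian measure, since the oracle's guarantee holds only there.

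I expect this robust continuous average-case reduction to be the main obstacle, and indeed the reason Conjecture \ref{gpeconj} is stated as a conjecture rather than a theorem. Exact or low-error worst-case-to-average-case reductions for the permanent over \textit{finite fields} are classical (Lipton, Feige--Lund, Gemmell--Sudan), but transplanting them to the Gaussian measure while simultaneously tolerating both the $\pm\varepsilon\cdot n!$ additive error and the $\delta$ failure probability appears to require genuinely new ideas, since finite-field polynomial identities and error-correcting bounds do not transfer directly. Moreover, establishing the needed anti-concentration for the Gaussian permanent is itself a separate hard sub-problem (the Permanent Anti-Concentration Conjecture of \cite{aark}). A realistic outcome of this program is therefore not an unconditional proof, but a reduction of Conjecture \ref{gpeconj} to that anti-concentration statement together with a robust interpolation lemma over the Gaussian measure.
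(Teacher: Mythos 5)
The first thing to say is that the statement you are proving is presented in the paper as a \emph{conjecture}, not a theorem: the paper offers no proof of it, explicitly describes it as ``the central conjecture made in \cite{aark},'' and notes that much of \cite{aark} is devoted only to giving \emph{evidence} for it. So there is no proof of the paper's to compare against; the only question is whether your argument actually establishes the claim. It does not, and to your credit you say so yourself in your final paragraph. The two load-bearing steps of your program --- (a) a worst-case-to-average-case reduction for the permanent over the Gaussian measure that is simultaneously robust to an additive error of $\varepsilon\cdot n!$ and to a $\delta$ probability of arbitrary failure, and (b) the anti-concentration bound needed so that $\varepsilon\cdot n!$ is small relative to $\left\vert\operatorname*{Per}\left(  X\right)  \right\vert^{2}$ for typical Gaussian $X$ --- are exactly the open sub-problems that keep Conjecture \ref{gpeconj} a conjecture. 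A correct identification of the obstacles is a useful diagnosis, but it is not a proof, and what you have written is a research program whose two main lemmas are unproven.

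On the specifics: your interpolation step has a real tension that you gesture at but do not resolve. A curve $t\mapsto A(t)$ passing through a fixed worst-case matrix $A$ at $t=0$ cannot have all of its other points exactly distributed as $\mathcal{N}\left(  0,1\right)_{\mathbb{C}}^{n\times n}$ and independent of $A$; the standard fix (e.g., $A(t)=(1-t)A+tG$) yields query points whose distribution drifts away from Gaussian as $t\to 0$, which is precisely where the oracle's guarantee evaporates. The classical finite-field self-correctors you invoke (Lipton, Gemmell--Sudan, Berlekamp--Welch) tolerate only small relative error or small failure probability and rely on exact field arithmetic; none is known to survive the combination of continuous additive error and constant failure probability. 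And the additive slack $\varepsilon\cdot n!$ utterly swamps the permanent of Valiant's $0/1$ hard instances, so the reduction must be routed entirely through anti-concentration, which is itself a separate unproven conjecture of \cite{aark}. Note finally that even the \emph{conditional} consequence the paper actually uses (Theorem \ref{aathm}) is obtained in \cite{aark} by approximate counting in $\mathsf{BPP}^{\mathsf{NP}}$, not by anything resembling a proof of $\mathsf{\#P}$-hardness of $\left\vert \text{\textsc{GPE}}\right\vert ^{2}$.
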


Much of \cite{aark} is devoted to giving evidence for Conjecture \ref{gpeconj}.

Notice that, if Conjecture \ref{gpeconj}\ holds, then combining it with
Theorem \ref{aathm}, we find that $\mathsf{SampP}=\mathsf{SampBQP}$\ implies
$\mathsf{P}^{\mathsf{\#P}}=\mathsf{BPP}^{\mathsf{NP}}$ (which in turn implies
$\mathsf{PH}=\mathsf{BPP}^{\mathsf{NP}}$\ by Toda's Theorem \cite{toda}). \ Or
to put it differently: assuming Conjecture \ref{gpeconj}, there can be no
polynomial-time classical algorithm to sample (even approximately) the output
distribution of quantum circuits in general, or the linear-optics experiment
of \cite{aark} in particular, unless the polynomial hierarchy collapses to
$\mathsf{BPP}^{\mathsf{NP}}$. \ This can be taken as a surprising new form of
evidence against the Extended Church-Turing Thesis---assuming, of course, that
one is willing to state the ECT in terms of sampling problems.

Now, by using Theorem \ref{fbqpthm} from this paper, we can deduce, in a
completely \textquotedblleft automatic\textquotedblright\ way, that the
counterpart of Theorem \ref{aathm} holds with \textit{search} problems in
place of sampling problems:

\begin{corollary}
\label{aacor}$\mathsf{FBPP}=\mathsf{FBQP}$\ implies $\left\vert
\text{\textsc{GPE}}\right\vert ^{2}\in\mathsf{FBPP}^{\mathsf{NP}}$. \ So in
particular, assuming $\left\vert \text{\textsc{GPE}}\right\vert ^{2}$\ is
$\mathsf{\#P}$-complete\ and $\mathsf{PH}$\ is infinite, it follows that
$\mathsf{FBPP}\neq\mathsf{FBQP}$.
\end{corollary}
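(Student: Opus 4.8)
The plan is to read the first sentence of the corollary as a two-step chain of implications, with both steps already in hand. Starting from the hypothesis $\mathsf{FBPP}=\mathsf{FBQP}$, the first step applies the direction $\mathsf{FBPP}=\mathsf{FBQP}\Rightarrow\mathsf{SampP}=\mathsf{SampBQP}$ of Theorem \ref{fbqpthm} to conclude $\mathsf{SampP}=\mathsf{SampBQP}$. The second step feeds this equality into Theorem \ref{aathm} of Aaronson and Arkhipov, whose hypothesis is exactly $\mathsf{SampP}=\mathsf{SampBQP}$ and whose conclusion is exactly $\left\vert \text{\textsc{GPE}}\right\vert ^{2}\in\mathsf{FBPP}^{\mathsf{NP}}$. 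Composing the two gives the desired implication, and since nothing beyond transitivity is used, this is precisely the ``automatic'' deduction promised in the text preceding the corollary.

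For the second sentence I would argue by contraposition. Assume $\left\vert \text{\textsc{GPE}}\right\vert ^{2}$ is $\mathsf{\#P}$-complete and $\mathsf{PH}$ is infinite, and suppose for contradiction that $\mathsf{FBPP}=\mathsf{FBQP}$. By the first sentence, $\left\vert \text{\textsc{GPE}}\right\vert ^{2}\in\mathsf{FBPP}^{\mathsf{NP}}$. Because $\left\vert \text{\textsc{GPE}}\right\vert ^{2}$ is $\mathsf{\#P}$-complete, every $\mathsf{\#P}$ function reduces to it, so $\mathsf{\#P}\subseteq\mathsf{FBPP}^{\mathsf{NP}}$ and hence $\mathsf{P}^{\mathsf{\#P}}\subseteq\mathsf{P}^{\mathsf{BPP}^{\mathsf{NP}}}=\mathsf{BPP}^{\mathsf{NP}}$, using that $\mathsf{BPP}^{\mathsf{NP}}$ is closed under polynomial-time Turing reductions. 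Toda's Theorem \cite{toda} gives $\mathsf{PH}\subseteq\mathsf{P}^{\mathsf{\#P}}$, so we obtain $\mathsf{PH}\subseteq\mathsf{BPP}^{\mathsf{NP}}$; since $\mathsf{BPP}^{\mathsf{NP}}$ sits at a fixed finite level of the hierarchy, this collapses $\mathsf{PH}$, contradicting the assumption that it is infinite. Therefore $\mathsf{FBPP}\neq\mathsf{FBQP}$. This is the same reasoning sketched in the paragraph before the corollary, except that Theorem \ref{fbqpthm} lets us start from $\mathsf{FBPP}=\mathsf{FBQP}$ rather than from $\mathsf{SampP}=\mathsf{SampBQP}$.

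Since the entire new ingredient is Theorem \ref{fbqpthm} and the rest is a concatenation of known statements, I do not expect a serious obstacle in writing this up. The only step that merits care is the move from ``$\left\vert \text{\textsc{GPE}}\right\vert ^{2}$ is $\mathsf{\#P}$-complete and lies in $\mathsf{FBPP}^{\mathsf{NP}}$'' to ``$\mathsf{P}^{\mathsf{\#P}}\subseteq\mathsf{BPP}^{\mathsf{NP}}$.'' Here $\left\vert \text{\textsc{GPE}}\right\vert ^{2}$ is a problem of \emph{approximating} a permanent on a \emph{random} instance, whereas $\mathsf{\#P}$-hardness concerns \emph{exact} counting on \emph{worst-case} instances, so the completeness reduction guaranteed by Conjecture \ref{gpeconj} must itself tolerate the additive error $\varepsilon\cdot n!$ and the failure probability $\delta$ of Problem \ref{gpe2}; once that (randomized, error-robust) reduction is taken as given, absorbing it into the $\mathsf{BPP}^{\mathsf{NP}}$ machine is routine and the argument closes.
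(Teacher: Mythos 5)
Your proposal is correct and matches the paper's own (implicit) argument exactly: the corollary is obtained by chaining Theorem \ref{fbqpthm} ($\mathsf{FBPP}=\mathsf{FBQP}\Rightarrow\mathsf{SampP}=\mathsf{SampBQP}$) with Theorem \ref{aathm}, and the second sentence follows from the $\mathsf{\#P}$-completeness assumption plus Toda's Theorem, precisely as in the paragraph preceding the corollary. Your closing caveat about what ``$\mathsf{\#P}$-complete'' must mean for an approximate, average-case problem is a fair observation, but it concerns the content of Conjecture \ref{gpeconj} (which the corollary simply assumes) rather than any gap in this deduction.
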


Indeed, assuming $\left\vert \text{\textsc{GPE}}\right\vert ^{2}$\ is
$\mathsf{\#P}$-complete, we cannot even have $\mathsf{FBQP}\subseteq
\mathsf{FBPP}^{\mathsf{PH}}$, unless $\mathsf{P}^{\mathsf{\#P}}=\mathsf{PH}%
$\ and the polynomial hierarchy collapses. \ To strengthen Corollary
\ref{aacor}\ still further, notice that one can replace $\mathsf{FBQP}$\ by
\textquotedblleft the class of search problems efficiently solvable with the
help of a linear-optics computer,\textquotedblright\ which is almost certainly
a proper subclass of $\mathsf{FBQP}$.

\subsection{Proof Overview\label{TECHNIQUES}}

Let us explain the basic difficulty we need to overcome to prove Theorem
\ref{samprel}. \ Given a probability distribution $\mathcal{D}_{x}$ over
$\left\{  0,1\right\}  ^{\operatorname*{poly}\left(  n\right)  }$, we want to
define a set $A_{x}\subseteq\left\{  0,1\right\}  ^{\operatorname*{poly}%
\left(  n\right)  }$, such that the ability to \textit{find} an element of
$A_{x}$ is equivalent to the ability to \textit{sample} from $\mathcal{D}_{x}%
$. \ At first glance, such a general reduction seems impossible. \ For let
$R=\left\{  A_{x}\right\}  _{x}$ be the search problem in which the goal is to
find an element of $A_{x}$\ given $x$. \ Then consider an oracle $\mathcal{O}$
that, on input $x$, returns the lexicographically first element of $A_{x}$.
\ Such an oracle $\mathcal{O}$ certainly solves $R$, but it seems useless\ if
our goal is to \textit{sample }uniformly from the set $A_{x}$ (or indeed, from
any other interesting distribution related to $A_{x}$).

Our solution will require going outside the black-box reduction
paradigm.\footnote{This was previously done for different reasons in a
cryptographic context---see for example Barak's beautiful PhD thesis
\cite{barak:thesis}.} \ In other words, given a sampling problem $S=\left\{
\mathcal{D}_{x}\right\}  _{x}$, we do \textit{not} show that $S\in
\mathsf{SampP}^{\mathcal{O}}$, where $\mathcal{O}$\ is any oracle that solves
the corresponding search problem $R_{S}$. \ Instead, we use the fact that
$\mathcal{O}$ is computed by a Turing machine. \ We then define $R_{S}$\ in
such a way that $\mathcal{O}$\ must return, not just any element in the
support of $\mathcal{D}_{x}$, but an element with \textit{near-maximal
Kolmogorov complexity}.

The idea here is simple: if a Turing machine $B$ is probabilistic, then it can
certainly output a string $x$ with high Kolmogorov complexity, by just
generating $x$ at random. \ But the converse also holds: if $B$ outputs a
string $x$\ with high Kolmogorov complexity, then $x$ \textit{must} have been
generated randomly. \ For otherwise, the code of $B$ would constitute a
succinct description of $x$.

Given any set $A\subseteq\left\{  0,1\right\}  ^{n}$, it is not hard to use
the above \textquotedblleft Kolmogorov trick\textquotedblright\ to force a
probabilistic Turing machine $B$ to sample almost-uniformly from $A$. \ We
simply ask $B$\ to produce $k$\ samples $x_{1},\ldots,x_{k}\in A$, for some
$k=\operatorname*{poly}\left(  n\right)  $, such that the tuple $\left\langle
x_{1},\ldots,x_{k}\right\rangle $\ has Kolmogorov complexity close to
$k\log_{2}\left\vert A\right\vert $. \ Then we output $x_{i}$\ for a uniformly
random $i\in\left[  k\right]  $.

However, one can also generalize the idea, to force $B$\ to sample from an
\textit{arbitrary} distribution $\mathcal{D}$, not necessarily uniform. \ One
way of doing this would be to reduce to the uniform case, by dividing the
support of $\mathcal{D}$\ into $\operatorname*{poly}\left(  n\right)
$\ \textquotedblleft buckets,\textquotedblright\ such that $\mathcal{D}$\ is
nearly-uniform within each bucket, and then asking $B$ to output
Kolmogorov-random elements in each bucket. \ In this paper, however, we will
follow a more direct approach, which exploits the beautiful known connection
between Kolmogorov complexity and Shannon information. \ In particular, we
will use the notion of a \textit{universal randomness test} from algorithmic
information theory \cite{livitanyi,gacs}. \ Let $\mathcal{U}$\ be the
\textquotedblleft universal prior,\textquotedblright\ in which each string
$x\in\left\{  0,1\right\}  ^{\ast}$ occurs with probability proportional to
$2^{-K\left(  x\right)  }$, where $K\left(  x\right)  $\ is the prefix-free
Kolmogorov complexity of $x$. \ Then given any computable distribution
$\mathcal{D}$ and fixed string $x$, the universal randomness test\ provides a
way to decide whether $x$\ was \textquotedblleft plausibly drawn from
$\mathcal{D}$,\textquotedblright\ by considering the ratio $\Pr_{\mathcal{D}%
}\left[  x\right]  /\Pr_{\mathcal{U}}\left[  x\right]  $. \ The main technical
fact we need to prove is simply that such a test can be applied in our
\textit{complexity-theoretic} context, where we care (for example) that the
number of samples from $\mathcal{D}$\ scales polynomially with the inverses of
the relevant error parameters.

From one perspective, our result represents a surprising use of Kolmogorov
complexity in the seemingly \textquotedblleft distant\textquotedblright\ realm
of polynomial-time reductions. \ Let us stress that we are \textit{not} using
Kolmogorov complexity as just a technical convenience, or as shorthand for a
counting argument. \ Rather, Kolmogorov complexity seems essential even to
define a search problem $R_{S}$\ with the properties we need. \ From another
perspective, however, our use of Kolmogorov complexity is close in spirit to
the reasons why Kolmogorov complexity was defined and studied in the first
place! \ The whole point, after all, is to be able to talk about the
\textquotedblleft randomness of an individual object,\textquotedblright%
\ without reference to any distribution from which the object was drawn. \ And
that is exactly what we need, if we want to achieve the \textquotedblleft
paradoxical\textquotedblright\ goal of sampling from a distribution, using an
oracle that is guaranteed only to output a \textit{fixed} string $x$\ with
specified properties.

\section{Preliminaries\label{PRELIM}}

\subsection{Sampling and Search Problems\label{COMPLEXITY}}

We first formally define sampling problems, as well as the complexity classes
$\mathsf{SampP}$\ and $\mathsf{SampBQP}$ of sampling problems that are
efficiently solvable by classical and quantum computers respectively.

\begin{definition}
[Sampling Problems, $\mathsf{SampP}$, and $\mathsf{SampBQP}$]A
\textit{sampling problem} $S$ is a collection of probability distributions
$\left(  \mathcal{D}_{x}\right)  _{x\in\left\{  0,1\right\}  ^{\ast}}$, one
for each input string $x\in\left\{  0,1\right\}  ^{n}$, where $\mathcal{D}%
_{x}$\ is a distribution over $\left\{  0,1\right\}  ^{p\left(  n\right)  }$,
for some fixed polynomial $p$. \ Then $\mathsf{SampP}$ is the class of
sampling problems $S=\left(  \mathcal{D}_{x}\right)  _{x\in\left\{
0,1\right\}  ^{\ast}}$ for which there exists a probabilistic polynomial-time
algorithm $B$\ that, given $\left\langle x,0^{1/\varepsilon}\right\rangle $ as
input, samples from a probability distribution $\mathcal{C}_{x}$\ such that
$\left\Vert \mathcal{C}_{x}-\mathcal{D}_{x}\right\Vert \leq\varepsilon$.
\ $\mathsf{SampBQP}$\ is defined the same way, except that $B$\ is a quantum
algorithm rather than a classical one.
\end{definition}

Let us also define search problems, as well as the complexity classes
$\mathsf{FBPP}$\ and $\mathsf{FBQP}$ of search problems that are efficiently
solvable by classical and quantum computers respectively.

\begin{definition}
[Search Problems, $\mathsf{FBPP}$, and $\mathsf{FBQP}$]A search problem $R$ is
a collection of nonempty sets $\left(  A_{x}\right)  _{x\in\left\{
0,1\right\}  ^{\ast}}$, one for each input string $x\in\left\{  0,1\right\}
^{n}$, where $A_{x}\subseteq\left\{  0,1\right\}  ^{p\left(  n\right)  }$ for
some fixed polynomial $p$. \ Then$\ \mathsf{FBPP}$ is the class of search
problems $R=\left(  A_{x}\right)  _{x\in\left\{  0,1\right\}  ^{\ast}}$\ for
which there exists a probabilistic polynomial-time algorithm $B$ that, given
an input $x\in\left\{  0,1\right\}  ^{n}$ together with $0^{1/\varepsilon}$,
produces an output $y$ such that%
\[
\Pr\left[  y\in A_{x}\right]  \geq1-\varepsilon,
\]
where the probability is over $B$'s internal randomness. \ $\mathsf{FBQP}$\ is
defined the same way, except that $B$ is a quantum algorithm rather than a
classical one.
\end{definition}

\subsection{Algorithmic Information Theory\label{KOLMOG}}

We now review some basic definitions and results from the theory of Kolmogorov
complexity. \ Recall that a set of strings $P\subset\left\{  0,1\right\}
^{\ast}$\ is called \textit{prefix-free} if no $x\in P$\ is a prefix of any
other $y\in P$.

\begin{definition}
[Kolmogorov complexity]\label{kolmogdef}Fix a universal Turing machine $U$,
such that the set of valid programs of $U$ is prefix-free. \ Then $K\left(
y\right)  $, or the prefix-free Kolmogorov complexity of $y$, is the minimum
length of a program $x$\ such that $U\left(  x\right)  =y$. \ We can also
define the conditional Kolmogorov complexity\ $K\left(  y|z\right)  $, as the
minimum length of a program $x$ such that $U\left(  \left\langle
x,z\right\rangle \right)  =y$.
\end{definition}

We are going to need two basic lemmas that relate Kolmogorov complexity to
standard information theory, and that can be found in the book of Li and
Vit\'{a}nyi \cite{livitanyi} for example. \ The first lemma follows almost
immediately from Shannon's noiseless channel coding theorem.

\begin{lemma}
\label{kolmogsmall}Let $\mathcal{D}=\left\{  p_{x}\right\}  $\ be any
computable distribution over strings, and let $x$\ be any element in the
support of $\mathcal{D}$. \ Then%
\[
K\left(  x\right)  \leq\log_{2}\frac{1}{p_{x}}+K\left(  \mathcal{D}\right)
+O\left(  1\right)  ,
\]
where $K\left(  \mathcal{D}\right)  $\ represents the length of the shortest
program to sample from $\mathcal{D}$. \ The same holds if we replace $K\left(
x\right)  $\ and $K\left(  \mathcal{D}\right)  $\ by $K\left(  x|y\right)
$\ and $K\left(  \mathcal{D}|y\right)  $\ respectively, for any fixed $y$.
\end{lemma}

The next lemma follows from a counting argument.

\begin{lemma}
[\cite{livitanyi}]\label{kolmogbig}Let $\mathcal{D}=\left\{  p_{x}\right\}
$\ be any distribution over strings (not necessarily computable). \ Then there
exists a universal constant $b$ such that%
\[
\Pr_{x\sim\mathcal{D}}\left[  K\left(  x\right)  \geq\log_{2}\frac{1}{p_{x}%
}-c\right]  \geq1-\frac{b}{2^{c}}.
\]
The same holds if we replace $K\left(  x\right)  $\ by $K\left(  x|y\right)
$\ for any fixed $y$.
\end{lemma}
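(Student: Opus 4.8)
The plan is to prove the contrapositive form of the bound directly, via a counting argument that amounts to a single application of Kraft's inequality. That is, rather than lower-bounding $\Pr_{x\sim\mathcal{D}}[K(x)\geq\log_2(1/p_x)-c]$, I would upper-bound the total $\mathcal{D}$-mass sitting on the \emph{bad} set of strings whose complexity falls below $\log_2(1/p_x)-c$, and show this mass is at most $2^{-c}$.

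The first step is to rewrite the bad event in a form that exposes the connection to Kolmogorov coding. The inequality $K(x)<\log_2\frac{1}{p_x}-c$ is equivalent to
\[
p_x<2^{-c}\cdot2^{-K(x)}.
\]
So, writing $\mathrm{Bad}_c=\{x:K(x)<\log_2(1/p_x)-c\}$, every $x\in\mathrm{Bad}_c$ places \emph{less} $\mathcal{D}$-mass on itself than a $2^{-c}$-shrunk version of the quantity $2^{-K(x)}$. The second step is then to sum:
\[
\Pr_{x\sim\mathcal{D}}\left[x\in\mathrm{Bad}_c\right]=\sum_{x\in\mathrm{Bad}_c}p_x<2^{-c}\sum_{x\in\mathrm{Bad}_c}2^{-K(x)}\leq2^{-c}\sum_{x}2^{-K(x)}.
\]
The crucial observation, which does all the work, is that each string $x$ has a shortest program of length exactly $K(x)$, and these shortest programs are \emph{distinct} for distinct outputs (since $U$ is a function, a single program cannot produce two different outputs). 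They therefore form a subset of the prefix-free set of all valid programs of $U$, and any subset of a prefix-free set is itself prefix-free. Kraft's inequality then yields $\sum_x 2^{-K(x)}\leq1$, so $\Pr_{x\sim\mathcal{D}}[x\in\mathrm{Bad}_c]<2^{-c}$, which proves the lemma with the explicit universal constant $b=1$.

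For the conditional statement, I would fix $y$ and simply rerun the argument with the machine $p\mapsto U(\langle p,y\rangle)$ in place of $U$: for each fixed $y$ the shortest programs computing the various $x$ from $y$ are again distinct and prefix-free, so $\sum_x 2^{-K(x|y)}\leq1$, and the identical bound follows. The only real ``obstacle'' here is conceptual rather than computational: one must recognize that the event to be controlled is precisely the set on which $\mathcal{D}$ underweights $x$ relative to the prefix-free code length, so that no properties of $\mathcal{D}$ beyond its being a probability distribution are ever used. This is exactly why the lemma, unlike Lemma \ref{kolmogsmall}, requires \emph{no} computability assumption on $\mathcal{D}$: the bound $\sum_x 2^{-K(x)}\leq1$ is a fact about the universal machine $U$ alone, and $\mathcal{D}$ enters only through the elementary rearrangement $p_x<2^{-c}2^{-K(x)}$.
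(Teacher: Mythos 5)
Your proof is correct, and it is exactly the standard ``counting argument'' that the paper invokes (without writing out) when it attributes Lemma \ref{kolmogbig} to \cite{livitanyi}: rewrite the bad event as $p_x<2^{-c}2^{-K(x)}$ and sum, using the Kraft inequality $\sum_x 2^{-K(x)}\leq 1$ for the prefix-free set of shortest programs, which even yields the explicit constant $b=1$. Your observation that no computability assumption on $\mathcal{D}$ is needed is also exactly right and is the point of stating the lemma in this generality.
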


\subsection{Information Theory\label{INFOTHY}}

This section reviews some basic definitions and facts from information theory.
\ Let $\mathcal{A}=\left\{  p_{x}\right\}  _{x}$\ and $\mathcal{B}=\left\{
q_{x}\right\}  _{x}$\ be two probability distributions over $\left[  N\right]
$. \ Then recall that the \textit{variation distance} between $\mathcal{A}%
$\ and $\mathcal{B}$\ is defined as%
\[
\left\Vert \mathcal{A}-\mathcal{B}\right\Vert :=\frac{1}{2}\sum_{i=1}%
^{N}\left\vert p_{x}-q_{x}\right\vert ,
\]
while the \textit{KL-divergence} is%
\[
D_{KL}\left(  \mathcal{A}||\mathcal{B}\right)  :=\sum_{i=1}^{N}p_{x}\log
_{2}\frac{p_{x}}{q_{x}}.
\]
The variation distance and the KL-divergence are related as follows:

\begin{proposition}
[Pinsker's Inequality]\label{pinsker}$\left\Vert \mathcal{A}-\mathcal{B}%
\right\Vert \leq\sqrt{2D_{KL}\left(  \mathcal{A}||\mathcal{B}\right)  }.$
\end{proposition}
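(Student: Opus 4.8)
The plan is to prove the inequality in two stages: first reduce from arbitrary distributions on $\left[N\right]$ to the two-outcome (Bernoulli) case by a data-processing argument, and then dispatch the Bernoulli case with elementary calculus. This keeps the stated constant $2$ comfortably within reach, since in the base-two normalization used here the sharp constant is actually the smaller $\frac{\ln 2}{2}$.

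For the reduction, I would set $S:=\left\{i:p_{i}\geq q_{i}\right\}$ and write $a:=\sum_{i\in S}p_{i}$ and $b:=\sum_{i\in S}q_{i}$. The familiar identity $\sum_{i}\left(p_{i}-q_{i}\right)=0$ gives $\sum_{i}\left\vert p_{i}-q_{i}\right\vert =2\sum_{i\in S}\left(p_{i}-q_{i}\right)$, so that $\left\Vert \mathcal{A}-\mathcal{B}\right\Vert =a-b$; thus the left-hand side depends only on the coarse-grained pair $\left(a,1-a\right)$ versus $\left(b,1-b\right)$. For the right-hand side I would invoke the log-sum inequality $\sum_{i}\alpha_{i}\log_{2}\frac{\alpha_{i}}{\beta_{i}}\geq\left(\sum_{i}\alpha_{i}\right)\log_{2}\frac{\sum_{i}\alpha_{i}}{\sum_{i}\beta_{i}}$, applied separately over $S$ and over its complement. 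Adding the two bounds shows that grouping the outcomes into $S$ and $S^{c}$ can only decrease the divergence, giving $D_{KL}\left(\mathcal{A}||\mathcal{B}\right)\geq a\log_{2}\frac{a}{b}+\left(1-a\right)\log_{2}\frac{1-a}{1-b}$. Hence it suffices to prove the inequality for the single Bernoulli pair $\left(a,b\right)$.

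For the Bernoulli case I would pass to natural logarithms, fix $b$, and study $g\left(a\right):=a\ln\frac{a}{b}+\left(1-a\right)\ln\frac{1-a}{1-b}-2\left(a-b\right)^{2}$ on $\left[0,1\right]$. One checks $g\left(b\right)=0$ and $g^{\prime}\left(b\right)=0$, while $g^{\prime\prime}\left(a\right)=\frac{1}{a\left(1-a\right)}-4\geq0$ because $a\left(1-a\right)\leq\frac{1}{4}$. Thus $g$ is convex with its minimum at $a=b$, so $g\geq0$ and the natural-log divergence is at least $2\left(a-b\right)^{2}$. Converting via $D_{KL}^{\ln}=\left(\ln 2\right)D_{KL}$ yields $\left(a-b\right)^{2}\leq\frac{\ln 2}{2}D_{KL}\left(\mathcal{A}||\mathcal{B}\right)\leq2D_{KL}\left(\mathcal{A}||\mathcal{B}\right)$, and chaining with the reduction gives $\left\Vert \mathcal{A}-\mathcal{B}\right\Vert =a-b\leq\sqrt{2D_{KL}\left(\mathcal{A}||\mathcal{B}\right)}$, as desired.

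The computation is routine once the structure is fixed; the only step needing genuine care is the reduction, where one must verify that coarse-graining the sample space into $\left\{S,S^{c}\right\}$ cannot increase the KL-divergence, so that passing to the Bernoulli pair loses nothing on the variation-distance side while only helping on the divergence side. The slack between the claimed constant $2$ and the sharp $\frac{\ln 2}{2}$ means no sharpness tracking is required, and degenerate cases (where some $q_{i}=0$ while $p_{i}>0$, forcing $D_{KL}=\infty$) are handled trivially since the bound then holds vacuously.
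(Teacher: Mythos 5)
Your proof is correct. Note that the paper states Pinsker's Inequality without proof, as a standard cited fact, so there is no in-paper argument to compare against. Your two-stage argument---coarse-graining to the Bernoulli pair $(a,b)$ with $S=\{i:p_i\geq q_i\}$ via the log-sum (data-processing) inequality, which fixes the variation distance at $a-b$ while only decreasing the divergence, and then verifying $a\ln\frac{a}{b}+(1-a)\ln\frac{1-a}{1-b}\geq 2(a-b)^2$ by the convexity computation $g''(a)=\frac{1}{a(1-a)}-4\geq 0$---is the standard textbook proof, and your bookkeeping of the base-$2$ normalization (sharp constant $\frac{\ln 2}{2}$ versus the stated $2$) is accurate. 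The edge cases (some $q_i=0$ with $p_i>0$, or $a\in\{0,1\}$) are handled as you say, so nothing is missing.
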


We will also need a fact about KL-divergence that has been useful in the study
of parallel repetition, and that can be found (for example) in a paper by Rao
\cite{rao}.

\begin{proposition}
[\cite{rao}]\label{anup}Let $\mathcal{R}$\ be a distribution over $\left[
N\right]  ^{k}$, with marginal distribution $\mathcal{R}_{i}$\ \ on the
$i^{th}$\ coordinate. \ Let $\mathcal{D}$\ be a distribution over $\left[
N\right]  $. \ Then%
\[
\sum_{i=1}^{k}D_{KL}\left(  \mathcal{R}_{i}||\mathcal{D}\right)  \leq
D_{KL}\left(  \mathcal{R}||\mathcal{D}^{k}\right)
\]

\end{proposition}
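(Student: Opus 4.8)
The plan is to prove this inequality by establishing an \emph{exact} decomposition of $D_{KL}(\mathcal{R}||\mathcal{D}^k)$ into the sum we want plus an extra term, and then observing that the extra term is nonnegative. To set up notation, write $p(x) = \Pr_{\mathcal{R}}[x]$ for $x = (x_1,\dots,x_k) \in [N]^k$, let $p_i = \mathcal{R}_i$ denote the $i$th marginal, and let $q = \mathcal{D}$. The key manoeuvre is to insert the product of marginals $\prod_{i} p_i(x_i)$ inside the logarithm and split it off. Concretely, since $\mathcal{D}^k$ is a product distribution, I would compute
\begin{align*}
D_{KL}(\mathcal{R}||\mathcal{D}^k)
&= \sum_{x \in [N]^k} p(x) \log_2 \frac{p(x)}{\prod_{i=1}^k q(x_i)} \\
&= \sum_{x} p(x) \log_2 \frac{p(x)}{\prod_{i} p_i(x_i)} + \sum_{i=1}^k \sum_{x} p(x) \log_2 \frac{p_i(x_i)}{q(x_i)}.
\end{align*}

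Next I would identify the two pieces. The first sum is exactly $D_{KL}(\mathcal{R} \,||\, \mathcal{R}_1 \times \cdots \times \mathcal{R}_k)$, the relative entropy of the joint distribution against the product of its own marginals (the \emph{total correlation} of $\mathcal{R}$). In the second sum, for each fixed $i$ the summand depends on $x$ only through the single coordinate $x_i$, so marginalizing out the remaining $k-1$ coordinates collapses $\sum_{x} p(x) \log_2 \frac{p_i(x_i)}{q(x_i)}$ to $\sum_{x_i} p_i(x_i)\log_2 \frac{p_i(x_i)}{q(x_i)} = D_{KL}(\mathcal{R}_i || \mathcal{D})$. Summing over $i$, I obtain the clean identity
\[
D_{KL}(\mathcal{R}||\mathcal{D}^k) = D_{KL}(\mathcal{R} \,||\, \mathcal{R}_1 \times \cdots \times \mathcal{R}_k) + \sum_{i=1}^k D_{KL}(\mathcal{R}_i || \mathcal{D}).
\]

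Finally I would invoke the nonnegativity of relative entropy: $D_{KL}(\cdot||\cdot) \geq 0$ always, by Gibbs' inequality (equivalently, Jensen's inequality applied to the convex function $-\log_2$, or the log-sum inequality). Applying this to the total-correlation term $D_{KL}(\mathcal{R} \,||\, \mathcal{R}_1 \times \cdots \times \mathcal{R}_k) \geq 0$ and discarding it from the identity above yields precisely $\sum_{i=1}^k D_{KL}(\mathcal{R}_i || \mathcal{D}) \leq D_{KL}(\mathcal{R}||\mathcal{D}^k)$.

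There is no serious obstacle here; the only step requiring care is the bookkeeping in the second sum, where one must verify that because the reference distribution $\mathcal{D}^k$ \emph{factorizes}, each coordinate's contribution reduces to a genuine marginal divergence with no cross terms. It is exactly this product structure that makes the direct computation so transparent. (Were the reference distribution not a product, the natural route would instead be the chain rule for relative entropy, which expands $D_{KL}(\mathcal{R}||\mathcal{D}^k)$ as a sum of expected conditional divergences $\mathbb{E}_{x_{<i}\sim\mathcal{R}}[D_{KL}(\mathcal{R}_{i\mid x_{<i}}||\mathcal{D})]$, followed by convexity of $D_{KL}(\cdot||\mathcal{D})$ in its first argument and Jensen's inequality to replace each conditional by the marginal $\mathcal{R}_i$.)
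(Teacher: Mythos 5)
Your proof is correct. The paper does not actually prove this proposition---it is imported as a known fact from Rao's work on parallel repetition---so there is no internal argument to compare against; your decomposition $D_{KL}(\mathcal{R}\,||\,\mathcal{D}^k) = D_{KL}(\mathcal{R}\,||\,\mathcal{R}_1\times\cdots\times\mathcal{R}_k) + \sum_i D_{KL}(\mathcal{R}_i\,||\,\mathcal{D})$, followed by nonnegativity of the total-correlation term, is the standard way this lemma is established and is exactly what the cited source relies on. The only point worth a sentence of care is the degenerate case where some $q(x_i)=0$ while $p_i(x_i)>0$, in which case the right-hand side is $+\infty$ and the inequality holds vacuously; with that noted, the bookkeeping is airtight.
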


\section{Main Result\label{MAIN}}

Let $S=\left\{  \mathcal{D}_{x}\right\}  _{x}$ be a sampling problem. \ Then
our goal is to construct a search problem $R=R_{S}=\left\{  A_{x}\right\}
_{x}$ that is \textquotedblleft equivalent\textquotedblright\ to $S$. \ Given
an input of the form $\left\langle x,0^{1/\delta}\right\rangle $,\ the goal in
the search problem will be to produce an output $Y$\ such that $Y\in
A_{x,\delta}$, with success probability at least $1-\delta$. \ The running
time should be $\operatorname*{poly}\left(  n,1/\delta\right)  $.

Fix an input $x\in\left\{  0,1\right\}  ^{n}$, and let $\mathcal{D}%
:=\mathcal{D}_{x}$\ be the corresponding probability distribution over
$\left\{  0,1\right\}  ^{m}$. \ Let $p_{y}:=\Pr_{\mathcal{D}}\left[  y\right]
$\ be the probability of $y$. \ We now define the search problem $R$. \ Let
$N:=m/\delta^{2.1}$, and let $Y=\left\langle y_{1},\ldots,y_{N}\right\rangle
$\ be an $N$-tuple of $m$-bit strings. \ Then we set $Y\in A_{x,\delta}$\ if
and only if%
\[
\log_{2}\frac{1}{p_{y_{1}}\cdots p_{y_{N}}}\leq K\left(  Y~|~x,\delta\right)
+\beta,
\]
where $\beta:=1+\log_{2}1/\delta$.

The first thing we need to show is that any algorithm that solves the sampling
problem $S$ also solves the search problem $R$ with high probability.

\begin{lemma}
\label{stor}Let $\mathcal{C}=\mathcal{C}_{x}$ be any distribution over
$\left\{  0,1\right\}  ^{m}$ such that\ $\left\Vert \mathcal{C}-\mathcal{D}%
\right\Vert \leq\varepsilon$. \ Then%
\[
\Pr_{Y\sim\mathcal{C}^{N}}\left[  Y\notin A_{x,\delta}\right]  \leq\varepsilon
N+\frac{b}{2^{\beta}}.
\]

\end{lemma}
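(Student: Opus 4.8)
We must bound the probability that $N$ i.i.d. samples $Y = \langle y_1, \ldots, y_N \rangle$ drawn from $\mathcal{C}$ (which is $\varepsilon$-close to $\mathcal{D}$ in variation distance) fail to land in $A_{x,\delta}$. Recall $Y \in A_{x,\delta}$ precisely when $\log_2 \frac{1}{p_{y_1} \cdots p_{y_N}} \le K(Y \mid x, \delta) + \beta$, where the $p_{y_i}$ are the probabilities under the *true* distribution $\mathcal{D}$.

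**Plan.** The plan is to analyze the bad event under the idealized distribution $\mathcal{D}^N$ first, and then transfer the bound to $\mathcal{C}^N$ by paying the variation-distance cost. The point of having $N$ independent coordinates is that the tuple $Y$ is itself a single string drawn from the product distribution $\mathcal{D}^N = \{ p_{y_1} \cdots p_{y_N} \}$, and the membership condition is exactly a Kolmogorov-complexity lower bound of the form appearing in Lemma~\ref{kolmogbig}.

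**Key steps.** First I would apply Lemma~\ref{kolmogbig} with the distribution taken to be $\mathcal{D}^N$, the element taken to be $Y$, the conditioning string taken to be $\langle x, \delta \rangle$, and the slack parameter $c$ taken to be $\beta$. Since $\Pr_{\mathcal{D}^N}[Y] = p_{y_1} \cdots p_{y_N}$, the lemma gives
\[
\Pr_{Y \sim \mathcal{D}^N}\left[ K(Y \mid x, \delta) \ge \log_2 \frac{1}{p_{y_1} \cdots p_{y_N}} - \beta \right] \ge 1 - \frac{b}{2^{\beta}},
\]
and rearranging the inequality inside the bracket shows that this high-probability event is exactly $Y \in A_{x,\delta}$. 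Hence $\Pr_{Y \sim \mathcal{D}^N}[Y \notin A_{x,\delta}] \le b / 2^{\beta}$.

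Second, I would transfer from $\mathcal{D}^N$ to $\mathcal{C}^N$. The membership condition for $A_{x,\delta}$ is defined purely in terms of the fixed values $p_y = \Pr_{\mathcal{D}}[y]$ and the Kolmogorov complexity of $Y$ — it does not depend on which distribution we sample $Y$ from. Thus $A_{x,\delta}$ is a fixed subset of $(\{0,1\}^m)^N$, and the quantities $\Pr_{\mathcal{D}^N}[Y \notin A_{x,\delta}]$ and $\Pr_{\mathcal{C}^N}[Y \notin A_{x,\delta}]$ differ by at most the variation distance $\|\mathcal{C}^N - \mathcal{D}^N\|$. By subadditivity of variation distance over product distributions (a tensorization/hybrid argument), $\|\mathcal{C}^N - \mathcal{D}^N\| \le N \cdot \|\mathcal{C} - \mathcal{D}\| \le \varepsilon N$. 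Combining the two bounds yields
\[
\Pr_{Y \sim \mathcal{C}^N}[Y \notin A_{x,\delta}] \le \Pr_{Y \sim \mathcal{D}^N}[Y \notin A_{x,\delta}] + \varepsilon N \le \frac{b}{2^{\beta}} + \varepsilon N,
\]
which is the claimed bound.

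**Main obstacle.** The conceptually delicate point is the correct instantiation of Lemma~\ref{kolmogbig}: one must recognize that the product distribution $\mathcal{D}^N$ need not be computable for the lemma to apply (indeed $\mathcal{C}$ may be arbitrary, which is why the lemma's "not necessarily computable" hypothesis is essential), and that the conditioning on $\langle x, \delta \rangle$ is harmless since the lemma holds with $K(\cdot \mid y)$ for any fixed $y$. The only genuinely mechanical piece is the product-distribution bound $\|\mathcal{C}^N - \mathcal{D}^N\| \le N\varepsilon$, which is the standard hybrid argument and which I would state without belaboring the routine telescoping.
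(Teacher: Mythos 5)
Your proof is correct and follows essentially the same route as the paper's: apply Lemma~\ref{kolmogbig} to the product distribution $\mathcal{D}^{N}$ with slack $c=\beta$ and conditioning string $\langle x,\delta\rangle$ to get the $b/2^{\beta}$ bound, then transfer to $\mathcal{C}^{N}$ via the standard bound $\Vert\mathcal{C}^{N}-\mathcal{D}^{N}\Vert\leq\varepsilon N$. The instantiation details you flag (including that Lemma~\ref{kolmogbig} requires no computability hypothesis) are handled correctly.
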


\begin{proof}
We have%
\begin{align*}
\Pr_{Y\sim\mathcal{C}^{N}}\left[  Y\notin A_{x,\delta}\right]   &  \leq
\Pr_{Y\sim\mathcal{D}^{N}}\left[  Y\notin A_{x,\delta}\right]  +\left\Vert
\mathcal{C}^{N}-\mathcal{D}^{N}\right\Vert \\
&  \leq\Pr_{Y\sim\mathcal{D}^{N}}\left[  Y\notin A_{x,\delta}\right]
+\varepsilon N.
\end{align*}
So it suffices to consider a $Y$\ drawn from $\mathcal{D}^{N}$. \ By Lemma
\ref{kolmogbig},%
\[
\Pr_{Y\sim\mathcal{D}^{N}}\left[  K\left(  Y~|~x,\delta\right)  \geq\log
_{2}\frac{1}{p_{y_{1}}\cdots p_{y_{N}}}-\beta\right]  \geq1-\frac{b}{2^{\beta
}}%
\]
Therefore%
\[
\Pr_{Y\sim\mathcal{D}^{N}}\left[  Y\notin A_{x,\delta}\right]  \leq\frac
{b}{2^{\beta}},
\]
and we are done.
\end{proof}

The second thing we need to show is that any algorithm that solves the search
problem $R$ also samples from a distribution that is close to $\mathcal{D}%
$\ in variation distance.

\begin{lemma}
\label{rtos}Let $B$ be a probabilistic Turing machine, which given input
$\left\langle x,0^{1/\delta}\right\rangle $\ outputs an $N$-tuple
$Y=\left\langle y_{1},\ldots,y_{N}\right\rangle $\ of $m$-bit strings.
\ Suppose that%
\[
\Pr\left[  B\left(  x,0^{1/\delta}\right)  \in A_{x,\delta}\right]
\geq1-\delta,
\]
where the probability is over $B$'s internal randomness. \ Let $\mathcal{R}%
=\mathcal{R}_{x}$\ be the distribution over outputs of $B\left(  x\right)  $,
and let $\mathcal{C}=\mathcal{C}_{x}$\ be the distribution over $\left\{
0,1\right\}  ^{m}$\ that is obtained by from $\mathcal{R}$ by choosing one of
the $y_{i}$'s uniformly at random. \ Then there exists a constant $Q_{B}$,
depending on $B$, such that%
\[
\left\Vert \mathcal{C}-\mathcal{D}\right\Vert \leq\delta+Q_{B}\sqrt
{\frac{\beta}{N}}.
\]

\end{lemma}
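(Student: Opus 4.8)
The plan is to bound the variation distance $\left\Vert \mathcal{C}-\mathcal{D}\right\Vert$ by going through the KL-divergence via Pinsker's Inequality (Proposition \ref{pinsker}), so the real work is to bound $D_{KL}\left(\mathcal{C}||\mathcal{D}\right)$. The key structural tool is Proposition \ref{anup}: if $\mathcal{R}$ is the distribution over the full $N$-tuples $Y=\langle y_1,\ldots,y_N\rangle$ output by $B$, and $\mathcal{R}_i$ is its $i$-th marginal, then $\sum_{i=1}^N D_{KL}\left(\mathcal{R}_i||\mathcal{D}\right)\leq D_{KL}\left(\mathcal{R}||\mathcal{D}^N\right)$. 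Since $\mathcal{C}$ is obtained by picking a uniformly random coordinate $i\in[N]$ and outputting $y_i$, convexity of KL-divergence in its first argument gives $D_{KL}\left(\mathcal{C}||\mathcal{D}\right)\leq\frac{1}{N}\sum_{i=1}^N D_{KL}\left(\mathcal{R}_i||\mathcal{D}\right)$. Chaining these, it suffices to upper bound $D_{KL}\left(\mathcal{R}||\mathcal{D}^N\right)$ by something of order $\beta$ (plus lower-order terms), after which Pinsker yields $\left\Vert \mathcal{C}-\mathcal{D}\right\Vert\lesssim\sqrt{\beta/N}$, matching the claimed bound up to the additive $\delta$.

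So first I would write out $D_{KL}\left(\mathcal{R}||\mathcal{D}^N\right)=\sum_Y \Pr_{\mathcal{R}}[Y]\log_2\frac{\Pr_{\mathcal{R}}[Y]}{p_{y_1}\cdots p_{y_N}}$. The denominator is exactly the quantity controlled by membership in $A_{x,\delta}$: for any $Y\in A_{x,\delta}$ we have $\log_2\frac{1}{p_{y_1}\cdots p_{y_N}}\leq K(Y\,|\,x,\delta)+\beta$. This is the crucial link between the combinatorial definition of the search problem and the information-theoretic bound we want. The plan is to split the sum over $Y$ into the good event $Y\in A_{x,\delta}$ and the bad event $Y\notin A_{x,\delta}$, whose total $\mathcal{R}$-probability is at most $\delta$ by hypothesis.

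For the good part, I would bound $\log_2\frac{\Pr_{\mathcal{R}}[Y]}{p_{y_1}\cdots p_{y_N}}\leq \log_2\Pr_{\mathcal{R}}[Y]+K(Y\,|\,x,\delta)+\beta$. The essential observation is that $\mathcal{R}$ is itself a computable distribution (it is sampled by the Turing machine $B$ on input $\langle x,0^{1/\delta}\rangle$), so by Lemma \ref{kolmogsmall} applied conditionally on $\langle x,\delta\rangle$, we get $K(Y\,|\,x,\delta)\leq \log_2\frac{1}{\Pr_{\mathcal{R}}[Y]}+K(\mathcal{R}\,|\,x,\delta)+O(1)$. The term $K(\mathcal{R}\,|\,x,\delta)$ is just the length of the shortest program to sample $\mathcal{R}$ given $x$ and $\delta$; since $\mathcal{R}$ is the output distribution of the fixed machine $B$, this is a constant depending only on $B$ — this is precisely the constant $Q_B$ in the statement. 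Substituting, $\log_2\Pr_{\mathcal{R}}[Y]$ cancels against $-\log_2\Pr_{\mathcal{R}}[Y]$, leaving each good-$Y$ summand bounded by $\beta+Q_B+O(1)$, and summing against $\Pr_{\mathcal{R}}[Y]\leq 1$ keeps the good contribution at $O(\beta+Q_B)$.

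The main obstacle I anticipate is controlling the bad event $Y\notin A_{x,\delta}$, because there the summand $\log_2\frac{\Pr_{\mathcal{R}}[Y]}{p_{y_1}\cdots p_{y_N}}$ has no a priori upper bound and could in principle be large or even negative in an unhelpful direction. The natural fix is to note that Lemma \ref{kolmogsmall} holds for \emph{every} $Y$ in the support of $\mathcal{R}$, not only for those in $A_{x,\delta}$, so the bound $\log_2\frac{\Pr_{\mathcal{R}}[Y]}{p_{y_1}\cdots p_{y_N}}\leq K(Y\,|\,x,\delta)+\beta+\log_2\Pr_{\mathcal{R}}[Y]$ — wait, that reintroduces the definitional inequality that only good $Y$ satisfy. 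The cleaner route is to use Lemma \ref{kolmogsmall} directly to get, for all $Y$, $\log_2\frac{\Pr_{\mathcal{R}}[Y]}{p_{y_1}\cdots p_{y_N}}\leq K(\mathcal{R}\,|\,x,\delta)+O(1)=Q_B+O(1)$, which makes every summand uniformly bounded and sidesteps the case split on $A_{x,\delta}$ entirely for the \emph{upper} bound on $D_{KL}$. I would double-check whether the clean uniform bound alone suffices or whether the $\delta$ hypothesis is genuinely needed to absorb a residual term; this is the delicate point where the roles of $\beta$, $Q_B$, and the failure probability $\delta$ must be reconciled, and I expect the additive $\delta$ in the conclusion to enter precisely through the bad event together with a final application of Pinsker after all KL terms are collected.
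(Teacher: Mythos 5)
Your overall skeleton (the defining inequality of $A_{x,\delta}$ plus Lemma \ref{kolmogsmall} to bound a KL-divergence against $\mathcal{D}^{N}$, then Proposition \ref{anup} and Pinsker to pass to the mixture of marginals) is exactly the paper's, and your treatment of the ``good'' $Y$'s is correct: for $Y\in A_{x,\delta}$ one gets $\log_{2}\frac{\Pr_{\mathcal{R}}\left[  Y\right]  }{p_{y_{1}}\cdots p_{y_{N}}}\leq K\left(  \mathcal{R}\,|\,x,\delta\right)  +O\left(  1\right)  +\beta$, with $K\left(  \mathcal{R}\,|\,x,\delta\right)  $ a constant depending only on $B$. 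But your handling of the bad event is where the proof breaks, and your proposed ``cleaner route'' is based on a misreading of Lemma \ref{kolmogsmall}. That lemma bounds $K\left(  Y\,|\,x,\delta\right)  $ from above by $\log_{2}\frac{1}{\Pr_{\mathcal{R}}\left[  Y\right]  }$ plus a constant; it says nothing whatsoever about $p_{y_{1}}\cdots p_{y_{N}}$, i.e.\ about the $\mathcal{D}^{N}$-probability of $Y$. The only bridge between $K\left(  Y\,|\,x,\delta\right)  $ and $p_{y_{1}}\cdots p_{y_{N}}$ is the defining inequality of $A_{x,\delta}$, which by construction is available \emph{only} on the good event. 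So there is no uniform bound ``for all $Y$,'' and the bad summands are genuinely uncontrollable: $B$ may place probability up to $\delta$ on tuples containing strings outside the support of $\mathcal{D}$, in which case $D_{KL}\left(  \mathcal{R}\,||\,\mathcal{D}^{N}\right)  =\infty$ and your good/bad split of that sum cannot be repaired.

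The missing idea is to dispose of the bad event \emph{before} any KL computation, by replacing $\mathcal{R}$ with the conditioned distribution $\mathcal{R}^{\prime}$, namely $\mathcal{R}$ conditioned on $B\left(  x,0^{1/\delta}\right)  \in A_{x,\delta}$. The hypothesis $\Pr\left[  B\left(  x,0^{1/\delta}\right)  \in A_{x,\delta}\right]  \geq1-\delta$ gives $\left\Vert \mathcal{R}-\mathcal{R}^{\prime}\right\Vert \leq\delta$, hence $\left\Vert \mathcal{C}-\mathcal{C}^{\prime}\right\Vert \leq\delta$ for the corresponding mixtures of marginals; this is exactly where the additive $\delta$ in the conclusion enters, as a variation-distance cost paid up front rather than as a term inside the KL bound. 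Since $\mathcal{R}^{\prime}$ is supported entirely on $A_{x,\delta}$, every $Y$ in its support satisfies the defining inequality, and your ``good case'' computation (with $q_{Y}:=\Pr_{\mathcal{R}^{\prime}}\left[  Y\right]  $ and $K\left(  \mathcal{R}^{\prime}\right)  $ in place of $\Pr_{\mathcal{R}}\left[  Y\right]  $ and $K\left(  \mathcal{R}\right)  $) goes through verbatim, giving $D_{KL}\left(  \mathcal{R}^{\prime}\,||\,\mathcal{D}^{N}\right)  \leq\beta+K\left(  \mathcal{R}^{\prime}\right)  +O\left(  1\right)  $; Proposition \ref{anup}, Pinsker, and your convexity step (or, as in the paper, Cauchy--Schwarz across the $N$ marginals) then yield $\left\Vert \mathcal{C}^{\prime}-\mathcal{D}\right\Vert \leq Q_{B}\sqrt{\beta/N}$, and the triangle inequality finishes. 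With the conditioning step inserted, the rest of your outline is essentially the paper's proof.
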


\begin{proof}
Let $\mathcal{R}^{\prime}$\ be a distribution that is identical to
$\mathcal{R}$, except that we condition on\ $B\left(  x,0^{1/\delta}\right)
\in A_{x,\delta}$. \ Then by hypothesis, $\left\Vert \mathcal{R}%
-\mathcal{R}^{\prime}\right\Vert \leq\delta$. \ Now let $\mathcal{R}%
_{i}^{\prime}$\ be the marginal distribution of $\mathcal{R}^{\prime}$\ on the
$i^{th}$\ coordinate, and let%
\[
\mathcal{C}^{\prime}=\frac{1}{N}\sum_{i=1}^{N}\mathcal{R}_{i}^{\prime}%
\]
be the distribution over $\left\{  0,1\right\}  ^{m}$\ that is obtained from
$\mathcal{R}^{\prime}$\ by choosing one of the $y_{i}$'s uniformly at random.
\ Then clearly $\left\Vert \mathcal{C}-\mathcal{C}^{\prime}\right\Vert
\leq\delta$ as well. \ So by the triangle inequality,%
\begin{align*}
\left\Vert \mathcal{C}-\mathcal{D}\right\Vert  &  \leq\left\Vert
\mathcal{C}-\mathcal{C}^{\prime}\right\Vert +\left\Vert \mathcal{C}^{\prime
}-\mathcal{D}\right\Vert \\
&  \leq\delta+\left\Vert \mathcal{C}^{\prime}-\mathcal{D}\right\Vert ,
\end{align*}
and it suffices to upper-bound $\left\Vert \mathcal{C}^{\prime}-\mathcal{D}%
\right\Vert $.

Let $q_{Y}:=\Pr_{\mathcal{R}^{\prime}}\left[  Y\right]  $. \ Then by Lemma
\ref{kolmogsmall},%
\[
K\left(  Y~|~x,\delta\right)  \leq\log_{2}\frac{1}{q_{Y}}+K\left(
\mathcal{R}^{\prime}\right)  +O\left(  1\right)
\]
for all $Y\in\left(  \left\{  0,1\right\}  ^{m}\right)  ^{N}$. \ Also, since
$Y\in A_{x,\delta}$, by assumption we have%
\[
\log_{2}\frac{1}{p_{y_{1}}\cdots p_{y_{N}}}\leq K\left(  Y~|~x,\delta\right)
+\beta.
\]
Combining,%
\[
\log_{2}\frac{1}{p_{y_{1}}\cdots p_{y_{N}}}\leq\log_{2}\frac{1}{q_{Y}%
}+K\left(  \mathcal{R}^{\prime}\right)  +O\left(  1\right)  +\beta.
\]
This implies the following upper bound on the KL-divergence:%
\begin{align*}
D_{KL}\left(  \mathcal{R}^{\prime}||\mathcal{D}^{N}\right)   &  =\sum
_{Y\in\left(  \left\{  0,1\right\}  ^{m}\right)  ^{N}}q_{Y}\log_{2}\frac
{q_{Y}}{p_{y_{1}}\cdots p_{y_{N}}}\\
&  \leq\max_{Y}\log_{2}\frac{q_{Y}}{p_{y_{1}}\cdots p_{y_{N}}}\\
&  \leq K\left(  \mathcal{R}^{\prime}\right)  +O\left(  1\right)  +\beta.
\end{align*}
So by Proposition \ref{anup},%
\[
\sum_{i=1}^{N}D_{KL}\left(  \mathcal{R}_{i}^{\prime}||\mathcal{D}\right)  \leq
D_{KL}\left(  \mathcal{R}^{\prime}||\mathcal{D}^{N}\right)  \leq K\left(
\mathcal{R}^{\prime}\right)  +O\left(  1\right)  +\beta,
\]
and by Proposition \ref{pinsker},%
\[
\frac{1}{2}\sum_{i=1}^{N}\left\Vert \mathcal{R}_{i}^{\prime}-\mathcal{D}%
\right\Vert ^{2}\leq K\left(  \mathcal{R}^{\prime}\right)  +O\left(  1\right)
+\beta.
\]
So by Cauchy-Schwarz,%
\[
\sum_{i=1}^{N}\left\Vert \mathcal{R}_{i}^{\prime}-\mathcal{D}\right\Vert
\leq\sqrt{N\left(  2\beta+2K\left(  \mathcal{R}^{\prime}\right)  +O\left(
1\right)  \right)  }.
\]
Hence%
\[
\left\Vert \mathcal{C}^{\prime}-\mathcal{D}\right\Vert \leq\sqrt{\frac
{2\beta+2K\left(  \mathcal{R}^{\prime}\right)  +O\left(  1\right)  }{N}},
\]
and%
\begin{align*}
\left\Vert \mathcal{C}-\mathcal{D}\right\Vert  &  \leq\left\Vert
\mathcal{C}-\mathcal{C}^{\prime}\right\Vert +\left\Vert \mathcal{C}^{\prime
}-\mathcal{D}\right\Vert \\
&  \leq\delta+\sqrt{\frac{2\beta+2K\left(  \mathcal{R}^{\prime}\right)
+O\left(  1\right)  }{N}}\\
&  \leq\delta+Q_{B}\sqrt{\frac{\beta}{N}},
\end{align*}
for some constant $Q_{B}$\ depending on $B$.
\end{proof}

By combining Lemmas \ref{stor}\ and \ref{rtos}, we can now prove Theorem
\ref{samprel}: that for any sampling problem $S=\left(  \mathcal{D}%
_{x}\right)  _{x\in\left\{  0,1\right\}  ^{\ast}}$\ (where $\mathcal{D}_{x}%
$\ is a distribution over $m=m\left(  n\right)  $-bit strings), there exists a
search problem $R_{S}=\left(  A_{x}\right)  _{x\in\left\{  0,1\right\}
^{\ast}}$\ that is \textquotedblleft equivalent\textquotedblright\ to $S$ in
the following two senses.

\begin{enumerate}
\item[(i)] Let $\mathcal{O}$\ be any oracle that, given $\left\langle
x,0^{1/\varepsilon},r\right\rangle $\ as input, outputs a sample from a
distribution $\mathcal{C}_{x}$\ such that $\left\Vert \mathcal{C}%
_{x}-\mathcal{D}_{x}\right\Vert \leq\varepsilon$, as we vary the random string
$r$. \ Then$\ R_{S}\in\mathsf{FBPP}^{\mathcal{O}}$.

\item[(ii)] Let $B$\ be any probabilistic Turing machine that, given
$\left\langle x,0^{1/\delta}\right\rangle $\ as input, outputs a $Y\in\left(
\left\{  0,1\right\}  ^{m}\right)  ^{N}$\ such that $Y\in A_{x,\delta}$\ with
probability at least $1-\delta$. \ Then$\ S\in\mathsf{SampP}^{B}$.
\end{enumerate}

\begin{proof}
[Proof of Theorem \ref{samprel}\ (Sampling/Searching Equivalence Theorem)]For
part (i), given an input $\left\langle x,0^{1/\delta}\right\rangle $, suppose
we want to output an $N$-tuple $Y=\left\langle y_{1},\ldots,y_{N}\right\rangle
\in\left(  \left\{  0,1\right\}  ^{m}\right)  ^{N}$\ such that $Y\in
A_{x,\delta}$, with success probability at least $1-\delta$. \ Recall that
$N=m/\delta^{2.1}$. \ Then the algorithm is this:

\begin{enumerate}
\item[(1)] Set $\varepsilon:=\frac{\delta}{2N}=\frac{\delta^{3.1}}{2m}$.

\item[(2)] Call $\mathcal{O}$ on inputs $\left\langle x,0^{1/\varepsilon
},r_{1}\right\rangle ,\ldots,\left\langle x,0^{1/\varepsilon},r_{N}%
\right\rangle $, where $r_{1},\ldots,r_{N}$\ are independent random strings,
and output the result as $Y=\left\langle y_{1},\ldots,y_{N}\right\rangle $.
\end{enumerate}

Clearly this algorithm runs in $\operatorname*{poly}\left(  n,1/\delta\right)
$\ time. \ Furthermore, by Lemma \ref{stor}, its failure probability is at
most%
\[
\varepsilon N+\frac{b}{2^{\beta}}\leq\delta.
\]

For part (ii), given an input $\left\langle x,0^{1/\varepsilon}\right\rangle
$, suppose we want to sample from a distribution $\mathcal{C}_{x}$\ such that
$\left\Vert \mathcal{C}_{x}-\mathcal{D}_{x}\right\Vert \leq\varepsilon$.
\ Then the algorithm is this:

\begin{enumerate}
\item[(1)] Set $\delta:=\varepsilon/2$, so that $N=m/\delta^{2.1}%
=\Theta\left(  m/\varepsilon^{2.1}\right)  $.

\item[(2)] Call $B$ on input $\left\langle x,0^{1/\delta}\right\rangle $, and
let $Y=\left\langle y_{1},\ldots,y_{N}\right\rangle $\ be $B$'s output.

\item[(3)] Choose $i\in\left[  N\right]  $\ uniformly at random, and output
$y_{i}$\ as the sample from $\mathcal{C}_{x}$.
\end{enumerate}

Clearly this algorithm runs in $\operatorname*{poly}\left(  n,1/\varepsilon
\right)  $\ time. \ Furthermore, by Lemma \ref{rtos} we have%
\begin{align*}
\left\Vert \mathcal{C}_{x}-\mathcal{D}_{x}\right\Vert  &  \leq\delta
+Q_{B}\sqrt{\frac{\beta}{N}}\\
&  \leq\frac{\varepsilon}{2}+Q_{B}\sqrt{\frac{\varepsilon^{2.1}\left(
2+\log1/\varepsilon\right)  }{m}},
\end{align*}
for some constant $Q_{B}$\ depending only on $B$. \ So in particular, there
exists a constant $C_{B}$\ such that $\left\Vert \mathcal{C}_{x}%
-\mathcal{D}_{x}\right\Vert \leq\varepsilon$\ for all $m\geq C_{B}$. \ For
$m<C_{B}$, we can simply hardwire a description of $\mathcal{D}_{x}$\ for
every $x$\ into the algorithm (note that the algorithm can depend on $B$; we
do not need a single algorithm that works for all $B$'s simultaneously).
\end{proof}

In particular, Theorem \ref{samprel} means that $S\in\mathsf{SampP}$\ if and
only if $R_{S}\in\mathsf{FBPP}$, and likewise $S\in\mathsf{SampBQP}$\ if and
only if $R_{S}\in\mathsf{FBQP}$, and so on for any model of computation that
is \textquotedblleft below recursive\textquotedblright\ (i.e., simulable by a
Turing machine) and has the extremely simple closure properties used in the proof.

\subsection{Implication for Quantum Computing\label{IMPQC}}

We now apply Theorem \ref{samprel}\ to\ prove Theorem \ref{fbqpthm}, that
$\mathsf{SampP}=\mathsf{SampBQP}$ if and only if $\mathsf{FBPP}=\mathsf{FBQP}$.

\begin{proof}
[Proof of Theorem \ref{fbqpthm}]First, suppose $\mathsf{SampP}%
=\mathsf{SampBQP}$. \ Then consider a search problem $R=\left(  A_{x}\right)
_{x}$ in $\mathsf{FBQP}$. \ By assumption, there exists a polynomial-time
quantum algorithm $Q$ that, given $\left\langle x,0^{1/\delta}\right\rangle
$\ as input, outputs a $y$\ such that $y\in A_{x}$ with probability at least
$1-\delta$. \ Let $\mathcal{D}_{x,\delta}$\ be the probability distribution
over $y$'s output by $Q$ on input $\left\langle x,0^{1/\delta}\right\rangle $.
\ Then to solve $R$ in $\mathsf{FBPP}$, clearly it suffices to sample
approximately from $\mathcal{D}_{x,\delta}$ in classical polynomial time.
\ But we can do this by the assumption that $\mathsf{SampP}=\mathsf{SampBQP}%
$.\footnote{As mentioned in Section \ref{INTRO}, the same argument shows that
$\mathsf{SampP}=\mathsf{SampBQP}$\ (or equivalently, $\mathsf{FBPP}%
=\mathsf{FBQP}$) implies $\mathsf{BPP}=\mathsf{BQP}$. \ However, the converse
is far from clear: we have no idea whether $\mathsf{BPP}=\mathsf{BQP}%
$\ implies $\mathsf{SampP}=\mathsf{SampBQP}$.}

Second, suppose $\mathsf{FBPP}=\mathsf{FBQP}$. \ Then consider a sampling
problem $S$\ in $\mathsf{SampBQP}$. \ By Theorem \ref{samprel}, we can define
a search counterpart $R_{S}$ of $S$, such that%
\begin{align*}
S\in\mathsf{SampBQP}  &  \Longrightarrow R_{S}\in\mathsf{FBQP}\\
&  \Longrightarrow R_{S}\in\mathsf{FBPP}\\
&  \Longrightarrow S\in\mathsf{SampP}.
\end{align*}
Hence $\mathsf{SampP}=\mathsf{SampBQP}$.
\end{proof}

Theorem \ref{fbqpthm} is easily seen to relativize: for all oracles $A$, we
have $\mathsf{SampP}^{A}=\mathsf{SampBQP}^{A}$\ if and only if $\mathsf{FBPP}%
^{A}=\mathsf{FBQP}^{A}$. \ (Of course, when proving a relativized version of
Theorem \ref{samprel}, we have to be careful to define the search problem
$R_{S}$\ using Kolmogorov complexity for Turing machines with $A$-oracles.)

\section{Extensions and Open Problems\label{OPEN}}

\subsection{Equivalence of Sampling and \textit{Decision}
Problems?\label{SAMPDEC}}

Perhaps the most interesting question we leave open is whether any nontrivial
equivalence holds between sampling (or search) problems on the one hand, and
\textit{decision} or \textit{promise} problems on the other. \ In Theorem
\ref{samprel}, it was certainly essential to consider large numbers of
outputs; we would have no idea how to prove an analogous result with a promise
problem $P_{S}$\ or language $L_{S}$\ instead of the search problem $R_{S}$.

One way to approach this question is as follows: does there exist a sampling
problem $S$ that is provably \textit{not} equivalent to any decision problem,
in the sense that for every language $L\subseteq\left\{  0,1\right\}  ^{\ast}%
$, either $S\notin\mathsf{SampP}^{L}$, or else there exists an oracle
$\mathcal{O}$\ solving $S$\ such that $L\notin\mathsf{BPP}^{\mathcal{O}}$?
\ What if we require the oracle $\mathcal{O}$\ to be computable? \ As far as
we know, these questions are open.

One might object that,\ given any sampling problem $S$, it is easy to define a
language $L_{S}$\ that is \textquotedblleft equivalent\textquotedblright\ to
$S$, by using the following simple enumeration trick. \ Let $M_{1}%
,M_{2},\ldots$\ be an enumeration of probabilistic Turing machines with
polynomial-time alarm clocks. \ Given a sampling problem $S=\left(
\mathcal{D}_{x}\right)  _{x\in\left\{  0,1\right\}  ^{\ast}}$ and an input
$X=\left\langle x,0^{1/\varepsilon}\right\rangle $, say that $M_{t}%
$\ \textit{succeeds} on $X$ if $M_{t}\left(  X\right)  $ samples from a
distribution $\mathcal{C}_{X}$\ such that $\left\Vert \mathcal{C}%
_{X}-\mathcal{D}_{x}\right\Vert \leq\varepsilon$. \ Also, if $x$ is an $n$-bit
string, define the \textit{length} of $X=\left\langle x,0^{1/\varepsilon
}\right\rangle $\ to be $\ell\left(  X\right)  :=n+1/\varepsilon$.

We now define a language $L_{S}\subseteq\left\{  0,1\right\}  ^{\ast}$. \ For
all $n$, let $M_{t\left(  n\right)  }$\ be the lexicographically first $M_{t}$
that succeeds on\ all inputs $X$\ such that $\ell\left(  X\right)  \leq n$.
\ Then for all $y\in\left\{  0,1\right\}  ^{n}$, we set $y\in L_{S}$\ if and
only if the Turing machine encoded by $y$ halts in at most $n^{t\left(
n\right)  }$\ steps when run on a blank tape.

\begin{proposition}
\label{stupid}$S\in\mathsf{SampP}$ if and only if $L_{S}\in\mathsf{P}$.
\end{proposition}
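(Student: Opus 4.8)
The plan is to make everything hinge on a single quantity: whether the index sequence $t(n)$ is bounded. I will establish two equivalences, namely $S\in\mathsf{SampP}$ if and only if $\sup_n t(n)<\infty$, and $L_S\in\mathsf{P}$ if and only if $\sup_n t(n)<\infty$, and then chain them. First I would record two structural facts. The set of machines succeeding on all inputs $X$ with $\ell(X)\le n$ is finite but nonempty: there are only finitely many such $X$ (both $n_x$ and $1/\varepsilon$ are bounded by $n$), and for any finite list of inputs one can hard-wire an $\varepsilon$-approximation to each $\mathcal{D}_x$ and sample from it under a polynomial alarm clock, so $t(n)$ is well defined and finite. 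Moreover, since every input of length $\le n-1$ also has length $\le n$, the set of machines succeeding up to length $n$ is contained in the set succeeding up to length $n-1$; hence the lex-first survivor can only move later in the enumeration, so $t(n)$ is nondecreasing.

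For $S\in\mathsf{SampP}\iff\sup_n t(n)<\infty$: if $S\in\mathsf{SampP}$, a fixed polynomial-time sampler for $S$ occupies some fixed index $t^\ast$ and succeeds on every input, so $t(n)\le t^\ast$ for all $n$. Conversely, if $t(n)\le t^\ast$, then by monotonicity $t(n)$ is eventually equal to a constant $t^{\ast\ast}$; the machine $M_{t^{\ast\ast}}$ succeeds on all inputs of length $\le n$ for every large $n$, hence on every input, and it runs in polynomial time, so $S\in\mathsf{SampP}$.

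For $L_S\in\mathsf{P}\iff\sup_n t(n)<\infty$, the easy direction reuses eventual constancy: if $t(n)=t^{\ast\ast}$ for all $n\ge n_0$, a decider hard-wires $n_0$, $t^{\ast\ast}$, and the finite table $t(1),\dots,t(n_0-1)$, then on input $y$ of length $n$ simulates the machine encoded by $y$ for $n^{t(n)}$ steps and accepts iff it halts; since $t(n)\le t^{\ast\ast}$ this is $\mathrm{poly}(n)$ time, and it is legitimate for the algorithm to depend on $S$ since we only need membership in $\mathsf{P}$. The hard direction is the contrapositive: assuming $t(n)\to\infty$, I would show $L_S\notin\mathsf{TIME}(n^c)$ for every fixed $c$. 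Fix $c$, and by the deterministic time hierarchy theorem choose $D\notin\mathsf{TIME}(n^{c+1})$ that is decidable in time $n^{c+2}$. Given an instance $w$ of $D$ with $|w|=n$, build in linear time a Turing machine $M_w$ of length $\Theta(n)$ that on a blank tape runs the $n^{c+2}$-time decider for $D$ on the hard-coded $w$ and halts iff $w\in D$; when it halts it does so within $O(n^{c+2})$ steps. Since $t\to\infty$, for all large $n$ the bound $|M_w|^{\,t(|M_w|)}$ exceeds this running time, so querying $L_S$ on $M_w$ answers exactly whether $M_w$ halts, i.e. whether $w\in D$. Hence $L_S\in\mathsf{TIME}(n^c)$ would let us decide $D$ in time $O(n^c)$ on all large inputs (and by a finite table on the rest), contradicting $D\notin\mathsf{TIME}(n^{c+1})$.

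The step I expect to be the main obstacle is precisely this last reduction: the time bound $n^{t(n)}$ defining $L_S$ need not be time-constructible, since $t(n)$ is defined through the undecidable notion of a machine ``succeeding'' on the sampling problem, so one cannot simply invoke a canned lower bound for the bounded-halting language $H_{n^{t(n)}}$. Routing through an auxiliary diagonal language $D$ and the map $w\mapsto M_w$ circumvents this, because the argument uses only that $n^{t(n)}$ is eventually larger than any fixed polynomial, never that it can be computed; that is what transfers the time-hierarchy lower bound to $L_S$ in spite of non-constructibility.
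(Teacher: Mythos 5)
Your proof is correct and follows essentially the same route as the paper's: if $S\in\mathsf{SampP}$ then $t(n)$ stabilizes and $L_S$ is a finite prefix followed by a fixed polynomial-time bounded halting problem, while if $S\notin\mathsf{SampP}$ then $t(n)\to\infty$ and $L_S$ escapes $\mathsf{P}$ by diagonalization. The only difference is that you spell out the details the paper compresses into ``standard diagonalization arguments, regardless of whether $t$ is time-constructible''---your reduction from a time-hierarchy-hard language $D$ via the padded machines $M_w$ is exactly the right way to handle the non-constructibility of $n^{t(n)}$, and your preliminary observations (well-definedness and monotonicity of $t(n)$) are correct and implicitly needed by the paper's argument as well.
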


\begin{proof}
First suppose $S\in\mathsf{SampP}$. \ Then there exists a polynomial-time
Turing machine that succeeds on every input $X=\left\langle x,0^{1/\varepsilon
}\right\rangle $. \ Let $M_{t}$\ be the lexicographically first such machine.
\ Then it is not hard to see that $L_{S}$\ consists of a finite prefix,
followed by the $n^{t}$-time bounded halting problem. \ Hence $L_{S}%
\in\mathsf{P}$.

Next suppose $S\notin\mathsf{SampP}$. \ Then \textit{no} machine $M_{t}%
$\ succeeds on every input $X$,\ so $t\left(  n\right)  $\ grows without bound
as a function of $n$. \ By standard diagonalization arguments, the
$n^{t\left(  n\right)  }$-time\ bounded halting problem is not in $\mathsf{P}%
$\ for any $t$\ that grows without bound, regardless of whether $t$ is
time-constructible. \ Therefore $L_{S}\notin\mathsf{P}$.
\end{proof}

Admittedly, Proposition \ref{stupid}\ feels like cheating---but \textit{why}
exactly is it cheating? \ Notice that we \textit{did} give a procedure to
decide whether $y\in L_{S}$ for any input $y$. \ This fact makes Proposition
\ref{stupid}\ at least \textit{somewhat} more interesting than the
\textquotedblleft tautological\textquotedblright\ way to ensure $S\in
\mathsf{SampP}\Longleftrightarrow L_{S}\in\mathsf{P}$:

\begin{quotation}
\noindent\textit{\textquotedblleft Take }$L_{S}$\textit{\ to be the empty
language if} $S\in\mathsf{SampP}$\textit{, or an }$\mathsf{EXP}$%
\textit{-complete language if }$S\notin\mathsf{SampP}$!\textquotedblright
\end{quotation}

\noindent In our view, the real problem with Proposition \ref{stupid} is that
it uses enumeration of Turing machines to avoid the need to \textit{reduce}
the sampling problem$\ S$\ to the language $L_{S}$\ or vice versa.\ \ Of
course, Theorem \ref{samprel} did not quite reduce $S$ to the search problem
$R_{S}$ either. \ However, Theorem \ref{samprel} came \textquotedblleft close
enough\textquotedblright\ to giving a reduction that we were able to use it to
derive interesting consequences for complexity theory, such as $\mathsf{SampP}%
=\mathsf{SampBQP}$ if and only if $\mathsf{FBPP}=\mathsf{FBQP}$. \ If we
attempted to prove similar consequences from Proposition \ref{stupid}, then we
would end up with a \textit{different} language $L_{S}$, depending on whether
our starting assumption was $S\in\mathsf{SampP}$, $S\in\mathsf{SampBQP}$, or
some other assumption. \ By contrast, Theorem \ref{samprel}\ constructed a
\textit{single} search problem $R_{S}$\ that is equivalent to $S$ in the
classical model, the quantum model, and every other \textquotedblleft
reasonable\textquotedblright\ computational model.

\subsection{Was Kolmogorov Complexity Necessary?\label{NECKOLMOG}}

Could we have proved Theorem \ref{samprel} \textit{without} using Kolmogorov
complexity or anything like it, and without making a computability assumption
on the oracle for $R_{S}$? \ One way to formalize this question is to ask the
analogue of our question from Section \ref{SAMPDEC}, but this time for
sampling versus \textit{search} problems. \ In other words, does there exist a
sampling problem $S$ such that, for every search problem $R$, either there
exists an oracle $\mathcal{O}$\ solving $S$\ such that $R\notin\mathsf{FBPP}%
^{\mathcal{O}}$, or there exists an oracle $\mathcal{O}$\ solving $R$\ such
that $S\notin\mathsf{SampP}^{\mathcal{O}}$? \ Notice that, if $R$ is the
search problem from Theorem \ref{samprel}, then the latter oracle (if it
exists) must be uncomputable. \ Thus, we are essentially asking whether the
computability assumption in Theorem \ref{samprel}\ was necessary.

\subsection{From Search Problems to Sampling Problems\label{RELTOSAMP}}

Theorem \ref{samprel} showed how to take any sampling problem $S$, and define
a search problem $R=R_{S}$\ that is equivalent to $S$. \ Can one go the other
direction? \ That is, given a search problem $R$, can one define a sampling
problem $S=S_{R}$\ that is equivalent to $R$? \ The following theorem is the
best we were able to find in this direction.

\begin{theorem}
\label{otherdir}Let $R=\left(  A_{x}\right)  _{x}$\ be any search problem.
\ Then there exists a sampling problem $S_{R}=\left\{  \mathcal{D}%
_{x}\right\}  _{x}$\ that is \textquotedblleft almost
equivalent\textquotedblright\ to $R$, in the following senses.

\begin{enumerate}
\item[(i)] If $\mathcal{O}$\ is any oracle solving $S_{R}$, then
$R\in\mathsf{FBPP}^{\mathcal{O}}$.

\item[(ii)] If $B$ is any probabilistic Turing machine solving $R$, then there
exists a constant $\eta_{B}>0$ such that a $\mathsf{SampP}^{B}$ machine can
sample from a probability distribution $\mathcal{C}_{x}$\ with $\left\Vert
\mathcal{C}_{x}-\mathcal{D}_{x}\right\Vert \leq1-\eta_{B}$.
\end{enumerate}
\end{theorem}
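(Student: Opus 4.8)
The plan is to build $S_R$ from a single rigid construction: a fifty--fifty mixture of a ``dummy'' distribution that can be reproduced verbatim and a ``real'' distribution supported on $A_x$, separated by a flag bit. Concretely, writing $A_x\subseteq\{0,1\}^{m}$, define $\mathcal{D}_x$ over $\{0,1\}^{m+1}$ by: with probability $\tfrac12$ output $\langle 0,0^m\rangle$ (flag bit $0$), and with probability $\tfrac12$ output $\langle 1,a\rangle$ with $a$ drawn uniformly from $A_x$ (flag bit $1$). The point of the flag is that, under $\mathcal{D}_x$, a $1$ in the first coordinate \emph{guarantees} $a\in A_x$, while the flag-$0$ half is a distribution one can write down and sample exactly with no knowledge of $A_x$ whatsoever.

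For part (i), suppose $\mathcal{O}$ samples from some $\mathcal{C}_x$ with $\left\Vert \mathcal{C}_x-\mathcal{D}_x\right\Vert\leq\varepsilon$. To solve $R$ with failure probability $\delta$, I would draw $T=O(\log(1/\delta))$ independent samples from $\mathcal{O}$ (with $\varepsilon:=\delta/(2T)$), discard every sample whose flag bit is $0$, and output the second coordinate of the first surviving sample. Two variation-distance estimates close the argument: since $\mathcal{D}_x$ places no mass on flag-$1$ strings outside $A_x$, for each draw $\Pr_{\mathcal{C}_x}[\text{flag}=1\ \wedge\ a\notin A_x]\leq\varepsilon$, so a union bound over the $T$ draws bounds the probability of a bad output by $T\varepsilon=\delta/2$; and since $\Pr_{\mathcal{C}_x}[\text{flag}=1]\geq\tfrac12-\varepsilon$, the chance that all $T$ draws are flagged $0$ is at most $(\tfrac12+\varepsilon)^T\leq\delta/2$. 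The procedure runs in $\operatorname*{poly}(n,1/\delta)$ time, giving $R\in\mathsf{FBPP}^{\mathcal{O}}$. Crucially, this uses the flag as a \emph{trusted} membership certificate purely through closeness in variation distance, so it never verifies $a\in A_x$ directly---which is essential, since $R$ is not assumed to be an $\mathsf{NP}$ search problem.

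For part (ii), given a machine $B$ solving $R$, I would define the $\mathsf{SampP}^B$ sampler to mimic the mixture: flip a fair coin; on heads output $\langle 0,0^m\rangle$, and on tails call $B$ (with any fixed constant failure parameter) and output $\langle 1,y\rangle$ for $B$'s answer $y$. Call the result $\mathcal{C}_x$. Because the flag-$0$ halves of $\mathcal{D}_x$ and $\mathcal{C}_x$ are \emph{identical}---both equal the point mass $\tfrac12\langle 0,0^m\rangle$---the overlap $\sum_z\min\{\mathcal{D}_x(z),\mathcal{C}_x(z)\}$ is at least $\tfrac12$ no matter how $B$'s answers are distributed on the flag-$1$ side. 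Hence $\left\Vert \mathcal{C}_x-\mathcal{D}_x\right\Vert\leq\tfrac12$, and part (ii) holds with the universal constant $\eta_B=\tfrac12$.

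The conceptual obstacle this construction is designed to dodge---and the reason the theorem claims only \emph{constant} overlap rather than arbitrarily small variation distance---is that a search solver may be maximally ``lazy,'' always returning the lexicographically first element of $A_x$. Against such a $B$ one can never reconstruct a spread-out distribution over a large set: the only solution one can name is the one $B$ hands back, so any attempt to match the uniform distribution on $A_x$ incurs variation distance $1-1/|A_x|\to 1$. The fifty--fifty trick circumvents this not by controlling the flag-$1$ half---which remains at the mercy of $B$---but by donating half the probability mass to a dummy half that is reproducible with no information about $A_x$ at all, thereby purchasing a guaranteed constant overlap. The only genuine bookkeeping is the $\varepsilon$-versus-$\delta$ calibration and the short rejection-sampling analysis in part (i); part (ii) then follows from the exact cancellation of the dummy halves.
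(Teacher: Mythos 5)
Your proof is correct, and it takes a genuinely different route from the paper's. The paper defines $\mathcal{D}_{x}$ to be the universal prior $\mathcal{U}_{x}$ conditioned on $A_{x}$: part (i) is then immediate because $\mathcal{D}_{x}$ is supported entirely on $A_{x}$, and part (ii) follows from Lemma \ref{kolmogsmall}, which shows that the output distribution $\left\{ q_{y}\right\} $ of any machine $B$ satisfies $q_{y}\leq g_{B}\cdot2^{-K\left( y|x\right) }\leq\left( g_{B}/c\right) p_{y}$, giving overlap at least $c/g_{B}$ and hence $\eta_{B}=c/g_{B}$. Your flag-bit mixture replaces the Kolmogorov-complexity argument with an elementary one and buys a \emph{universal} constant $\eta_{B}=\frac{1}{2}$; the price of your part (i) is a short rejection-sampling analysis that the paper's construction avoids, and both analyses are sound. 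The trade-off worth noting is in part (ii): your dummy atom makes the $1-\eta_{B}$ guarantee achievable by a sampler that never calls $B$ at all (always outputting $\left\langle 0,0^{m}\right\rangle $ already has variation distance exactly $\frac{1}{2}$ from $\mathcal{D}_{x}$), so the ``$R$ solves $S_{R}$ approximately'' direction carries no information about $R$. In the paper's construction, by contrast, \emph{any} distribution within variation distance $1-\eta$ of $\mathcal{D}_{x}$ must place mass at least $\eta$ on $A_{x}$, so a nontrivial approximate sampler for $S_{R}$ is genuinely interreducible with an $\eta$-success solver for $R$; that is arguably closer to the intended meaning of ``almost equivalent,'' even though both constructions satisfy the theorem as literally stated. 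Your closing observation about the lazy solver returning the lexicographically first element of $A_{x}$ is exactly the obstruction that limits both constructions to constant rather than $\varepsilon$ variation distance, which the paper leaves as an open problem.
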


\begin{proof}
Let $\mathcal{U}_{x}$\ be the universal prior, in which every string
$y$\ occurs with probability at least $c\cdot2^{-K\left(  y|x\right)  }$, for
some constant $c>0$. \ Then to define the sampling problem $S_{R}$, we let
$\mathcal{D}_{x}$\ be the distribution obtained by drawing $y\sim
\mathcal{U}_{x}$\ and then conditioning on the event $y\in A_{x}$. \ (Note
that $\mathcal{D}_{x}$ is well-defined, since $\mathcal{U}_{x}$\ assigns
nonzero probability to every $y$.)

For (i), notice that $\mathcal{D}_{x}$\ has support only on $A_{x}$. \ So if
we can sample a distribution $\mathcal{C}_{x}$\ such that $\left\Vert
\mathcal{C}_{x}-\mathcal{D}_{x}\right\Vert \leq\varepsilon$, then certainly we
can output an element of $A_{x}$\ with probability at least $1-\varepsilon$.

For (ii), let $\mathcal{C}_{x,\delta}$\ be the distribution over values of
$B\left(  x,0^{1/\delta},r\right)  $\ induced by varying the random string
$r$. \ Then we claim that $\left\Vert \mathcal{C}_{x,\delta}-\mathcal{D}%
_{x}\right\Vert \leq1-\Omega\left(  1\right)  $, so long as $\delta\leq
\Delta_{B}$ for some constant $\Delta_{B}$\ depending on $B$. \ To see this,
first let $\mathcal{C}^{\prime}$\ be the distribution obtained by drawing
$y\sim\mathcal{C}_{x,\delta}$\ and then conditioning on the event $y\in A_{x}%
$. \ Then since $\Pr_{y\sim\mathcal{C}_{x,\delta}}\left[  y\in A_{x}\right]
\geq1-\delta$, we have $\left\Vert \mathcal{C}^{\prime}-\mathcal{C}_{x,\delta
}\right\Vert \leq\delta$.

Now let $q_{y}:=\Pr_{\mathcal{C}^{\prime}}\left[  y\right]  $. \ Then by Lemma
\ref{kolmogsmall}, there exists a constant $g_{B}$\ depending on $B$ such that%
\[
q_{y}\leq g_{B}\cdot2^{-K\left(  y|x\right)  }%
\]
for all $y\in A_{x}$. \ On the other hand, let $p_{y}:=\Pr_{\mathcal{D}_{x}%
}\left[  y\right]  $\ and $u_{y}:=\Pr_{\mathcal{U}_{x}}\left[  y\right]  $.
\ Then there exists a constant $\alpha\geq1$\ such that $p_{y}=\alpha u_{y}%
$\ if $y\in A_{x}$\ and $p_{y}=0$\ otherwise. \ So%
\[
p_{y}\geq u_{y}\geq c\cdot2^{-K\left(  y|x\right)  }%
\]
for all $y\in A_{x}$. \ Hence $p_{y}\geq\frac{c}{g_{B}}q_{y}$, so%
\begin{align*}
\left\Vert \mathcal{C}^{\prime}-\mathcal{D}_{x}\right\Vert  &  =\sum_{y\in
A_{x}~:~p_{y}<q_{y}}\left\vert p_{y}-q_{y}\right\vert \\
&  \leq1-\frac{c}{g_{B}}.
\end{align*}
Therefore%
\begin{align*}
\left\Vert \mathcal{C}_{x,\delta}-\mathcal{D}_{x}\right\Vert  &
\leq\left\Vert \mathcal{C}_{x,\delta}-\mathcal{C}^{\prime}\right\Vert
+\left\Vert \mathcal{C}^{\prime}-\mathcal{D}_{x}\right\Vert \\
&  \leq1-\frac{c}{g_{B}}+\delta,
\end{align*}
which is $1-\Omega_{B}\left(  1\right)  $\ provided $\delta\leq\frac{c}%
{2g_{B}}$.
\end{proof}

We see it as an interesting problem whether Theorem \ref{otherdir}\ still
holds with the condition $\left\Vert \mathcal{C}_{x}-\mathcal{D}%
_{x}\right\Vert \leq1-\eta_{B}$\ replaced by $\left\Vert \mathcal{C}%
_{x}-\mathcal{D}_{x}\right\Vert \leq\varepsilon$\ (in other words, with
$S_{R}\in\mathsf{SampP}^{B}$).

\subsection{Making the Search Problem\ Checkable\label{CHECK}}

One obvious disadvantage of Theorem \ref{samprel} is that the search problem
$R=\left(  A_{x}\right)  _{x}$ is defined using Kolmogorov complexity, which
is uncomputable. \ In particular, there is no algorithm to decide whether
$y\in A_{x}$. \ However, it is not hard to fix this problem, by replacing the
Kolmogorov complexity with the \textit{time-bounded} or \textit{space-bounded}
Kolmogorov complexities in our definition of $R$. \ The price is that we then
also have to assume a complexity bound on the Turing machine $B$\ in the
statement of Theorem \ref{samprel}. \ In more detail:

\begin{theorem}
\label{samprel2}Let $S$\ be any sampling problem, and let $f$ be a
time-constructible function. \ Then there exists a search problem
$R_{S}=\left(  A_{x}\right)  _{x}$\ such that

\begin{enumerate}
\item[(i)] If $\mathcal{O}$\ is any oracle solving $S$, then $R_{S}%
\in\mathsf{FBPP}^{\mathcal{O}}$.

\item[(ii)] If $B$ is any $\mathsf{BPTIME}\left(  f\left(  n\right)  \right)
$ Turing machine solving $R_{S}$, then $S\in\mathsf{SampP}^{B}$.

\item[(iii)] There exists a $\mathsf{SPACE}\left(  f\left(  n\right)
+n^{O\left(  1\right)  }\right)  $\ algorithm to decide whether $y\in A_{x}$,
given $x$\ and $y$.
\end{enumerate}
\end{theorem}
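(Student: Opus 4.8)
The plan is to adapt the proof of Theorem \ref{samprel} by replacing the unbounded prefix-free Kolmogorov complexity $K(\cdot)$ in the definition of $A_{x,\delta}$ with a \emph{time-bounded} variant $K^{f}(\cdot)$, where the time bound is governed by the time-constructible function $f$. Concretely, I would fix a universal machine $U$ with prefix-free program set and define $K^{t}(y \mid x,\delta)$ to be the minimum length of a program $z$ such that $U(\langle z,x,\delta\rangle)$ outputs $y$ \emph{within $t$ steps}. I then redefine the search problem so that $Y = \langle y_{1},\ldots,y_{N}\rangle \in A_{x,\delta}$ if and only if
\[
\log_{2}\frac{1}{p_{y_{1}}\cdots p_{y_{N}}} \leq K^{f(n)+n^{O(1)}}\left(Y \mid x,\delta\right) + \beta,
\]
with $\beta := 1 + \log_{2}(1/\delta)$ exactly as before. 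The point of this definition is that deciding membership in $A_{x,\delta}$ now requires only enumerating all programs of length at most $\log_{2}(1/(p_{y_{1}}\cdots p_{y_{N}}))$, running each for $f(n)+n^{O(1)}$ steps, and checking whether any produces $Y$; this enumeration can be carried out in space $O(f(n)+n^{O(1)})$, giving part (iii).

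For part (i), I would re-run the proof of Lemma \ref{stor}, observing that the only property of $K$ used there is Lemma \ref{kolmogbig}, which is a pure counting argument and holds verbatim for $K^{t}$ in place of $K$ (there are at most $2^{\ell+1}$ programs of length $\leq \ell$ regardless of any time bound, so the probability that a $\mathcal{D}^{N}$-random $Y$ has $K^{t}(Y\mid x,\delta) < \log_{2}(1/(p_{y_{1}}\cdots p_{y_{N}})) - \beta$ is at most $b/2^{\beta}$). Since $K^{t}(Y) \geq K(Y)$ always, the time-bounded randomness test is only \emph{easier} to pass, so the oracle-based sampler from part (i) of Theorem \ref{samprel} still produces a valid $Y$ with the same failure bound $\varepsilon N + b/2^{\beta} \leq \delta$. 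Thus part (i) follows with no essential change.

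For part (ii), the key step is the analogue of Lemma \ref{rtos}, and here is where the complexity hypothesis on $B$ enters. The proof of Lemma \ref{rtos} invokes Lemma \ref{kolmogsmall} to assert $K(Y\mid x,\delta) \leq \log_{2}(1/q_{Y}) + K(\mathcal{R}') + O(1)$, using the fact that $\mathcal{R}'$ is a computable distribution sampled by running $B$ and conditioning. In the time-bounded setting I must instead produce a \emph{short program that runs within the allotted bound} $f(n)+n^{O(1)}$. This is exactly where the assumption $B \in \mathsf{BPTIME}(f(n))$ is used: the Shannon–Fano / noiseless-coding program underlying Lemma \ref{kolmogsmall} simulates $B$ (which halts in $f(n)$ steps) plus polynomial overhead for the arithmetic-coding bookkeeping, so it halts within $f(n)+n^{O(1)}$ steps and witnesses $K^{f(n)+n^{O(1)}}(Y\mid x,\delta) \leq \log_{2}(1/q_{Y}) + K(\mathcal{R}') + O(1)$. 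With this time-bounded version of Lemma \ref{kolmogsmall} in hand, the entire chain of inequalities in Lemma \ref{rtos} — the KL-divergence bound via Proposition \ref{anup}, Pinsker's inequality, and Cauchy–Schwarz — goes through unchanged, yielding $\|\mathcal{C}-\mathcal{D}\| \leq \delta + Q_{B}\sqrt{\beta/N}$, and hence $S \in \mathsf{SampP}^{B}$ by the same parameter choices as in the proof of Theorem \ref{samprel}.

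The main obstacle I anticipate is the time-bounded version of Lemma \ref{kolmogsmall}: I need to verify carefully that the Shannon coding argument can be implemented so that its running time is genuinely bounded by $f(n)$ (the cost of simulating $B$) plus a polynomial overhead, and in particular that computing the cumulative probabilities $q_{Y}$ needed for arithmetic coding does not blow up the time budget. Since $\mathcal{R}'$ is obtained from $B$ by conditioning on the event $Y \in A_{x,\delta}$ — itself a time-bounded Kolmogorov predicate — one must ensure the description of $\mathcal{R}'$ used inside the encoder is honest about its own time cost, so that the matching time bound $f(n)+n^{O(1)}$ appears on \emph{both} the defining inequality for $A_{x,\delta}$ and the encoder, allowing them to be combined. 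Once the bookkeeping on these time bounds is reconciled, the rest is a routine transcription of the unbounded proof.
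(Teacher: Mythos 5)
Your overall strategy---rerun the proof of Theorem \ref{samprel} with a resource-bounded Kolmogorov complexity in the definition of $A_{x,\delta}$, reuse the counting argument (Lemma \ref{kolmogbig}) for part (i), and invoke an ``effective'' version of Lemma \ref{kolmogsmall} for part (ii)---is exactly the paper's. The gap is in your choice of resource: you bound \emph{time} by $f(n)+n^{O(1)}$, and the effective coding lemma you then need is false. To witness $K^{T}\left(Y\mid x,\delta\right)\leq\log_{2}\left(1/q_{Y}\right)+O_{B}\left(1\right)$ via Shannon--Fano coding, the decoding program must do far more than ``simulate $B$ once plus polynomial bookkeeping'': to locate $Y$ from a codeword of length roughly $\log_{2}\left(1/q_{Y}\right)$ it has to evaluate (cumulative) probabilities of the conditioned output distribution $\mathcal{R}'$, which means enumerating all $2^{\Theta\left(f\left(n\right)\right)}$ random seeds of $B$ and, for the conditioning, deciding membership in $A_{x,\delta}$, itself an enumeration over exponentially many programs. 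That computation takes time exponential in $f(n)$ and cannot be squeezed into $f(n)+n^{O(1)}$ steps; it \emph{can}, however, be done by reusing work tape in space $O\left(f\left(n\right)+n^{O\left(1\right)}\right)$.

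This is precisely why the paper defines $A_{x,\delta}$ using the \emph{space}-bounded complexity $K_{\operatorname*{SPACE}\left(T\right)}$ with $T=f\left(n\right)+n^{O\left(1\right)}$: parts (i) and (iii) go through just as in your argument, and for part (ii) the decoder can compute every probability of a $\mathsf{BPTIME}\left(f\left(n\right)\right)$-samplable distribution within space $T$ and so recover $Y$ from its Shannon--Fano code. You in fact flag the encoder/decoder's running time as ``the main obstacle,'' and your own argument for part (iii) already appeals to a \emph{space} bound for an analogous enumeration; the fix is to make the same move in the definition of the search problem itself. (Patching your version by enlarging the time bound to $2^{O\left(f\left(n\right)\right)}$ would rescue part (ii) but degrade part (iii) to roughly $\mathsf{SPACE}\left(2^{O\left(f\left(n\right)\right)}\right)$, so the space-bounded definition is genuinely needed to get (ii) and (iii) simultaneously.)
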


\begin{proof}
[Proof Sketch]The proof is almost the same as the proof of Theorem
\ref{samprel}. \ Let $T:=f\left(  n\right)  +n^{O\left(  1\right)  }$, and
given a string $y$, let $K_{\operatorname*{SPACE}\left(  T\right)  }\left(
y\right)  $\ be the $T$-space bounded\ Kolmogorov complexity of $y$. \ Then
the only real difference is that, when defining the search problem $R_{S}$, we
replace the conditional Kolmogorov complexity $K\left(  Y~|~x,\delta\right)  $
by the space-bounded complexity $K_{\operatorname*{SPACE}\left(  T\right)
}\left(  Y~|~x,\delta\right)  $. \ This ensures that property (iii) holds.

Certainly property (i) still holds, since it only used the fact that there are
few tuples $Y\in\left(  \left\{  0,1\right\}  ^{m}\right)  ^{N}$\ with small
Kolmogorov complexity, and that is still true for space-bounded Kolmogorov complexity.

For property (ii), it suffices to observe that Lemma \ref{kolmogsmall} has the
following \textquotedblleft effective\textquotedblright\ version. \ Let
$\mathcal{D}=\left\{  p_{y}\right\}  $\ be any distribution over strings that
is samplable in $\mathsf{BPTIME}\left(  f\left(  n\right)  \right)  $, and let
$y$\ be any element in the support of $\mathcal{D}$. \ Then there exists a
constant $C_{\mathcal{D}}$,\ depending on $\mathcal{D}$,\ such that%
\[
K_{\operatorname*{SPACE}\left(  T\right)  }\left(  y\right)  \leq\log_{2}%
\frac{1}{p_{y}}+C_{\mathcal{D}}.
\]
The proof is simply to notice that, in $\mathsf{SPACE}\left(  f\left(
n\right)  +n^{O\left(  1\right)  }\right)  $, we can compute the probability
$p_{y}$ of \textit{every} $y$ in the support of $\mathcal{D}$, and can
therefore recover any particular string $y$ from its Shannon-Fano code. \ This
means that the analogue of Lemma \ref{rtos}\ goes through, as long as $B$\ is
a $\mathsf{BPTIME}\left(  f\left(  n\right)  \right)  $\ machine.
\end{proof}

In Theorem \ref{samprel2}, how far can we decrease the computational
complexity of $R_{S}$? \ It is not hard to replace the upper bound of
$\mathsf{SPACE}\left(  f\left(  n\right)  +n^{O\left(  1\right)  }\right)
$\ by $\mathsf{CH}\left(  f\left(  n\right)  +n^{O\left(  1\right)  }\right)
$\ (where $\mathsf{CH}$\ denotes the counting hierarchy), but can we go
further? \ It seems unlikely that one could check in $\mathsf{NP}$\ (or
$\mathsf{NTIME}\left(  f\left(  n\right)  +n^{O\left(  1\right)  }\right)  $)
whether $y\in A_{x}$, for a search problem $R_{S}=\left\{  A_{x}\right\}
_{x}$\ equivalent to $S$, but can we give formal evidence against this possibility?

\section{Acknowledgments}

I thank Alex Arkhipov for helpful discussions that motivated this work, and
Dana Moshkovitz for pointing me to Proposition \ref{anup}\ from \cite{rao}.

\bibliographystyle{plain}
\bibliography{thesis}

\end{document}